\long\def\ca#1\cb{} 
\newcommand{\braket}[2]{\langle #1 \hspace{1pt} | \hspace{1pt} #2 \rangle}
\newcommand{\bramatket}[3]{\langle #1 \hspace{1pt} | #2 | \hspace{1pt} #3 \rangle}
\newcommand{\norm}[2][]{#1| \! #1| #2 #1| \! #1|}
\newcommand{\ket}[1]{|#1\rangle}               
\newcommand{\bra}[1]{\langle #1|}              
\newcommand{\dya}[1]{\ket{#1}\!\bra{#1}}
\newcommand{\HC}{\mathcal{H}}
\newcommand{\PC}{\mathcal{P}}
\newcommand{\TC}{\mathcal{T}}
\newcommand{\Tr}{{\rm Tr}}
\newcommand{\sig}{{\rm sig}}
\renewcommand{\geq}{\geqslant}
\renewcommand{\leq}{\leqslant}
\DeclareMathOperator*{\argmin}{arg\,min}
\renewcommand{\vec}[1]{\boldsymbol{#1}}  
\newcommand{\ad}{^\dagger}
\newcommand{\tout}{{\text{out}}}
\newcommand{\tin}{{\text{in}}}
\newcommand{\thv}{\vec{\theta}}
\newtheorem{theorem}{Theorem}
\newtheorem{proposition}{Proposition}
\theoremstyle{definition}
\begin{document}

\title{Entangled Datasets for Quantum Machine Learning}

\author{Louis Schatzki}
\email{louisms2@illinois.edu}
\affiliation{Theoretical Division, Los Alamos National Laboratory, Los Alamos, NM 87545, USA}
\affiliation{Department of Electrical and Computer Engineering, University of Illinois at Urbana-Champaign, Urbana, IL 61801, USA}

\author{Andrew Arrasmith}
\affiliation{Theoretical Division, Los Alamos National Laboratory, Los Alamos, NM 87545, USA}

\author{Patrick J. Coles}
\affiliation{Theoretical Division, Los Alamos National Laboratory, Los Alamos, NM 87545, USA}

\author{M. Cerezo}
\email{cerezo@lanl.gov}
\affiliation{Information Sciences, Los Alamos National Laboratory, Los Alamos, NM 87545, USA}
\affiliation{Theoretical Division, Los Alamos National Laboratory, Los Alamos, NM 87545, USA}
\affiliation{Center for Nonlinear Studies, Los Alamos National Laboratory, Los Alamos, NM, USA}

\begin{abstract}
High-quality, large-scale datasets have played a crucial role in the development and success of classical machine learning.  Quantum Machine Learning (QML) is a new field that aims to use quantum computers for data analysis, with the hope of obtaining a quantum advantage of some sort. While most proposed  QML architectures are benchmarked using classical datasets, there is still doubt whether QML on classical datasets will achieve such an advantage. In this work, we argue that one should instead employ quantum datasets composed of quantum states. For this purpose, we introduce the NTangled dataset composed of quantum states with different amounts and types of multipartite entanglement. We first show how a quantum neural network can be trained to generate the states in the NTangled dataset. Then, we use the NTangled dataset to benchmark QML models for supervised learning classification tasks. We also consider an alternative entanglement-based dataset, which is scalable  and is composed of states prepared by quantum circuits with different depths. As a byproduct of our results, we introduce a novel method for generating multipartite entangled states, providing a use-case of quantum neural networks for quantum entanglement theory.

\end{abstract}

\maketitle

\section{Introduction}

The field of Machine Learning (ML) revolutionized the way we use computers to solve problems. At their core, ML algorithms solve tasks without being explicit programmed to do so, but rather by learning from data and generalizing to previously unseen cases~\cite{mohri2018foundations}. Nowadays, ML is widely used and is considered a fundamental tool in virtually all areas of modern research and technology.

The historical development of ML owes a large part of its success to the progress in algorithmic techniques such as multi-perceptron neural networks~\cite{mohri2018foundations}, the backpropagation method \cite{Rumelhart1986},  support-vector machines~\cite{siegelmann1995computational}, and more generally kernel methods~\cite{hofmann2008kernel}. However, perhaps just as relevant as these techniques is the development of datasets that could be use to benchmark and improve ML architectures. In fact, it has been noted that some of the major ML breakthroughs such as  human-level spontaneous speech recognition, IBM's Deep Blue chess engine victory over Garry Kasparov, and  Google’s GoogLeNet software for object classification, were facilitated by access to high-quality datasets~\cite{wissner2016data}. Today, datasets are a staple in ML, and researchers benchmarking novel models can readily test their architectures by accessing a wide range of datasets such as  MNIST~\cite{lecun1998mnist}, Dogs vs. Cats~\cite{kaggle2016dogs}, ProteinNet~\cite{alquraishi2019proteinnet}, and Iris~\cite{andrews1985iris}.

The current availability of moderate-size quantum computers~\cite{preskill2018quantum} has led to an increasing interest in using these devices to achieve a quantum advantage, i.e., to perform a computational task faster than any classical supercomputer. In this context, Quantum Machine Learning  (QML)~\cite{schuld2015introduction,biamonte2017quantum,schuld2019quantum}, which generalizes the concept of classical ML based on the principles of quantum mechanics, is one of the most promising emergent applications to make practical use of quantum computers.

Despite tremendous recent progress~\cite{abbas2020power,huang2021provably,huang2021power,kubler2021inductive,cerezo2020variationalreview,kawai2020predicting,zhu2019training,killoran2019continuous}, the field of QML is still in its infancy. For instance, while certain architectures for Quantum Neural Networks (QNNs) have been developed~\cite{siomau2014quantum,tacchino2019artificial,beer2020training,romero2017quantum,farhi2018classification,abbas2020power,cong2019quantum}, there is still no consensus regarding which one should be favored. In fact, proposing novel QNNs is a difficult task due to the counter-intuitive nature of quantum mechanics. Thus, the search for scalable and efficient QNNs models is still an active field of research. Here, it is worth noting that  when a new QNN is introduced,  its success is usually tested through heuristics and benchmarking experiments.  

\begin{figure}[t]
\includegraphics[width=1\linewidth]{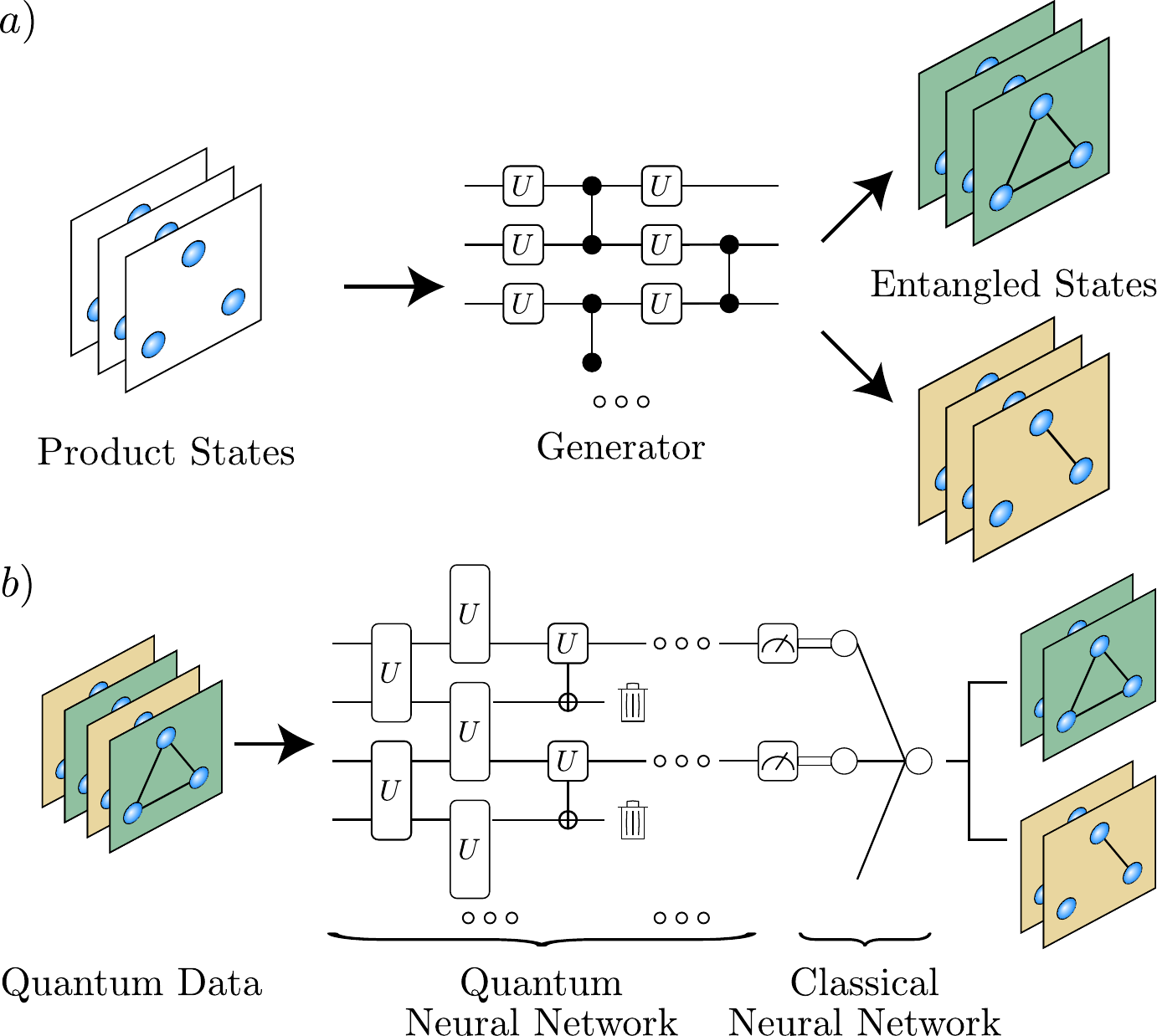}
\caption{\textbf{Overview of our main results}. (a) The states in our NTangled dataset are generated by training a Quantum Neural Network (QNN). Here, product states are inputted into a variational circuit trained so that the outputted states have a desired entanglement value. By changing the goal value, sets of states with varying amounts of entanglement are generated (represented with different colors). (b) The states in the NTangled dataset are used to benchmark a QML model for the supervised learning task of classifying the states according to their amount of entanglement. In the figure, the QML model is composed of a QNN, plus a one node classical neural network. }
\label{fig:overview}
\end{figure}

Currently, most researchers benchmark QML models using the same datasets employed in classical ML. To use these classical datasets in a quantum setting, one must encode the classical information into quantum states with an embedding scheme~\cite{havlivcek2019supervised,benedetti2019generative,lloyd2020quantum,schuld2020circuit,larose2020robust,huang2021power}. Thus, here the QML model is composed of the embedding (mapping classical information to quantum states) plus the trainable QNN (performing operations on the quantum states).  This leads to several issues. First, there is no general embedding method one can use, and the choice of embedding can affect the QML model trainability~\cite{thanaslip2021subtleties}. Moreover, recent results have raised doubts on the potential for achieving a  quantum advantage using classical data~\cite{kubler2021inductive}. Hence, there is the question of why one would benchmark QML performance on datasets that might not lead to quantum advantage.

This motivates the consideration of quantum datasets. Here, the data is composed of states generated from some quantum process~\cite{ uvarov2020machine,banchi2021generalization,bernien2017probing,huang2021provably,bilkis2021semi,nghiem2020unified}, and hence no embedding scheme is needed (here, the main ingredient of the QML is the QNN). Moreover,  it is believed that QML models on quantum data can indeed achieve a quantum advantage~\cite{huang2021provably,kubler2021inductive}. Yet, despite the tremendous historical relevance of datasets in developing classical ML models, there is a dearth of truly quantum datasets for QML. Recently, the need for such datasets has been recognized, and small-scale quantum datasets on one and two qubits are now available~\cite{perrier2021qdataset}. However, more work still needs to be done to develop large-scale quantum datasets.

In this work we introduce a quantum dataset for QML composed of quantum states with different amounts and types of multipartite entanglement. We refer to it as the NTangled dataset. First, as shown in Fig.~\ref{fig:overview}(a), we train a QNN to generate states in the NTangled dataset. We test different ansatzes for the generator QNN, and we show how some properties of the ansatz can affect their performance. We then benchmark QML models for the supervised learning task of classifying the states in the NTangled dataset according to their level of entanglement (see Fig.~\ref{fig:overview}(b)). We additionally show that the same classifying QNNs can be used to classify states generated with hardware-efficient quantum circuits with different depths (the underlying mechanism here being that entanglement properties vary with circuit depth). Hence, states generated from hardware-efficient quantum circuits with different depths can also be considered as a scalable, easy-to-produce dataset.

At the basis of our dataset lies the fact that entanglement is the defining property of quantum mechanics, and is a fundamental resource for quantum computation~\cite{horodecki2009quantum,gigena2020one,sharma2020reformulation}. However, characterizing the multipartite entanglement in quantum states is in itself a difficult task~\cite{horodecki2009quantum} that can  benefit from the computational power of QML. In addition, generating varied multipartite entangled states is a difficult task for non-trivial system sizes. Here we introduce a novel method to do just that via QNNs. Hence, we also argue that the methods used to generate the NTangled dataset have relevance for quantum entanglement theory.

We finally note that in the Appendix and in the Github repository of Ref.~\cite{schatzki2021github}, we present a description and parameters for the ansatzes that will allow the reader to reproduce the NTangled dataset.   The datasets can be efficiently stored as they come in the form of a parametrized quantum circuit plus the values of the trained parameters.

\section{General Framework}~\label{sec:framework}
In this section we first introduce the general framework for benchmarking a QML model by solving  a paradigmatic supervised machine learning task: classification of labeled data. Here, we will further argue that one should benchmark QML models with quantum datasets rather than classical ones. Finally, we also provide a brief review of quantum datasets that have been used in the literature.

\subsection{Machine Learning}

In a supervised QML classification task, one is given a dataset of the form $\{\psi_i, y_i\}$. Here, $\psi_i=\dya{\psi_i}$, where $\ket{\psi_i}\in S\subseteq \HC$ are pure $n$-qubit quantum states in a set $S$ belonging to the  $d$-dimension Hilbert space $\HC$ (with $d=2^n$), and  $y_i\in Y$ are labels associated with each state $\psi_i$ according to some unknown model $h:S\rightarrow Y$. For the sake of simplicity, in this work we consider binary classification, so that $Y=\{0,1\}$. 

The first step is to create a training set $\TC=\{\psi_i, y_i\}_{i=1}^N$,   obtained by sampling states from the dataset according to a given probability distribution. Then, one trains a QML model, i.e. a parametrized function $f(\psi_i,\thv):S\rightarrow Y$, with the goal being that it accurately predicts the label of each quantum state. Ideally, one would want not only that the labels predicted by $f(\psi_i,\thv)$ match those predicted by $h$ in  $\TC$ (i.e. small training error), but also that $f(\psi_i,\thv)$ is capable of  making good predictions over new and previously unseen data (i.e., small generalization error).

The specific form of $f(\psi_i,\thv)$ depends on the task at hand, but it will generally  consist of the following steps. The model takes as input $m$ copies of the input state $\psi_i$, and possible $k$ ancillary qubits that are initialized to some  pure state $\phi=\dya{\phi}$. These states are then sent through a QNN $U(\thv_1)$, where $\thv_1$ can be composed of continuous parameters (such as trainable gate rotation angles), as well as discrete parameters (such as gate placements). At the end of the QNN, one measures  $r$ qubits, with $r\leq m\cdot n +k$, in the computational basis.  That is, one estimates the probabilities of obtaining measurement  outcomes for the bitstrings $\vec{z}\in\{0,1\}^{\otimes r}$ as
\begin{equation}
    p(\vec{z}|\psi_i,\thv_1)=\langle \vec{z}| U(\thv_1)\left(\psi_i^{\otimes m}\otimes \phi\right)U\ad(\thv_1)|\vec{z} \rangle\,.
    \label{eq:PQC}
\end{equation}
Note that here one can always apply a fixed unitary prior to the measurement to change the measurement basis. However, for simplicity we assume that one measures in the computational basis.

The measurement statistics are then post-processed via some classical function yielding predicted labels $\hat{y}\in[0,1]$. In this paper we consider linear post-processing achieved by a single layer neural network with a scalar output and sigmoid activation. The assigned  label $\hat{y}$ is then obtained from
\begin{equation}\label{eq:probability}
 \hat{y}=\sig\left(\sum_{\vec{z}}w_{\vec{z}}p(\vec{z}|\psi_i,\thv_1)+b\right)\,,   
\end{equation}
where $b$ (bias) and the coefficients $w_{\vec{z}}$ are real parameters, and where $\sig(\cdot)$ denotes the sigmoid activation function. Thus, we here define the set of trainable parameters as $\thv=\{\thv_1,\{w_{\vec{z}}\},b\}$. Finally, we note that one can also apply regularization to the classical post-processing to prevent overfitting. The latter can be achieved by adding a term into the loss function of the form $\lambda\norm{\vec{w}}$, where we took $\norm{\cdot}$ to be the $L1$ norm (i.e., lasso regularization).

To  quantify the success of the classification task over the training set $\TC$, one defines a loss function.  For instance, the mean-squared error loss function is of the form
\begin{equation}
    L(\thv)=\frac{1}{N}\sum_{i=1}^N(\hat{y}-y)^2\,.
\end{equation}
Then, the parameters in the QML model are trained by solving the optimization task
\begin{equation}\label{eq:optimization}
    \argmin_{\thv}L(\thv)\,.
\end{equation}
To test the accuracy of the model on the training and test sets, we further apply a Boolean function that yields  label $1(0)$ if $   \hat{y}\geq (<) 0.5$.

Let us here remark that when dealing with classical data, there is an extra step needed to encode the classical information in a quantum state. In this case,  the dataset is of the form $\{\vec{x_i},y_i\}$, where $\vec{x}_i\in X$ are  bitstrings of a given length, and one employs an embedding scheme $W(\vec{x}):X\rightarrow S $ that maps  bitstrings to quantum states~\cite{havlivcek2019supervised,lloyd2020quantum,schuld2020circuit,larose2020robust,abbas2020power,huang2021power}. Here, $n$  qubits are initialized to the fiduciary state  $\chi=\dya{\vec{0}}$ (where $\ket{\vec{0}}=\ket{0}^{\otimes n}$)  and  $W(\vec{x})$ is a unitary such that  $W(\vec{x}_i)\chi W\ad(\vec{x}_i)=\chi_{i}$. This embedding can be realized, for instance, with a parametrized quantum circuit whose rotation angles depend on the bits in $\vec{x}$. The latter  maps the original classical dataset to a set of the form $\{\chi_i,y_i\}$, after which the QML procedure previously presented can be employed. 

We note that when dealing with classical data, the embedding scheme can also be considered as part of the QML model.  Thus, the success in solving the optimization task of Eq.~\eqref{eq:optimization} does not only depend on the choice of QNN but also on the choice of embedding~\cite{schuld2021effect}.

\subsection{Benchmarking QML models}

\begin{figure}[t]
\includegraphics[width=1\linewidth]{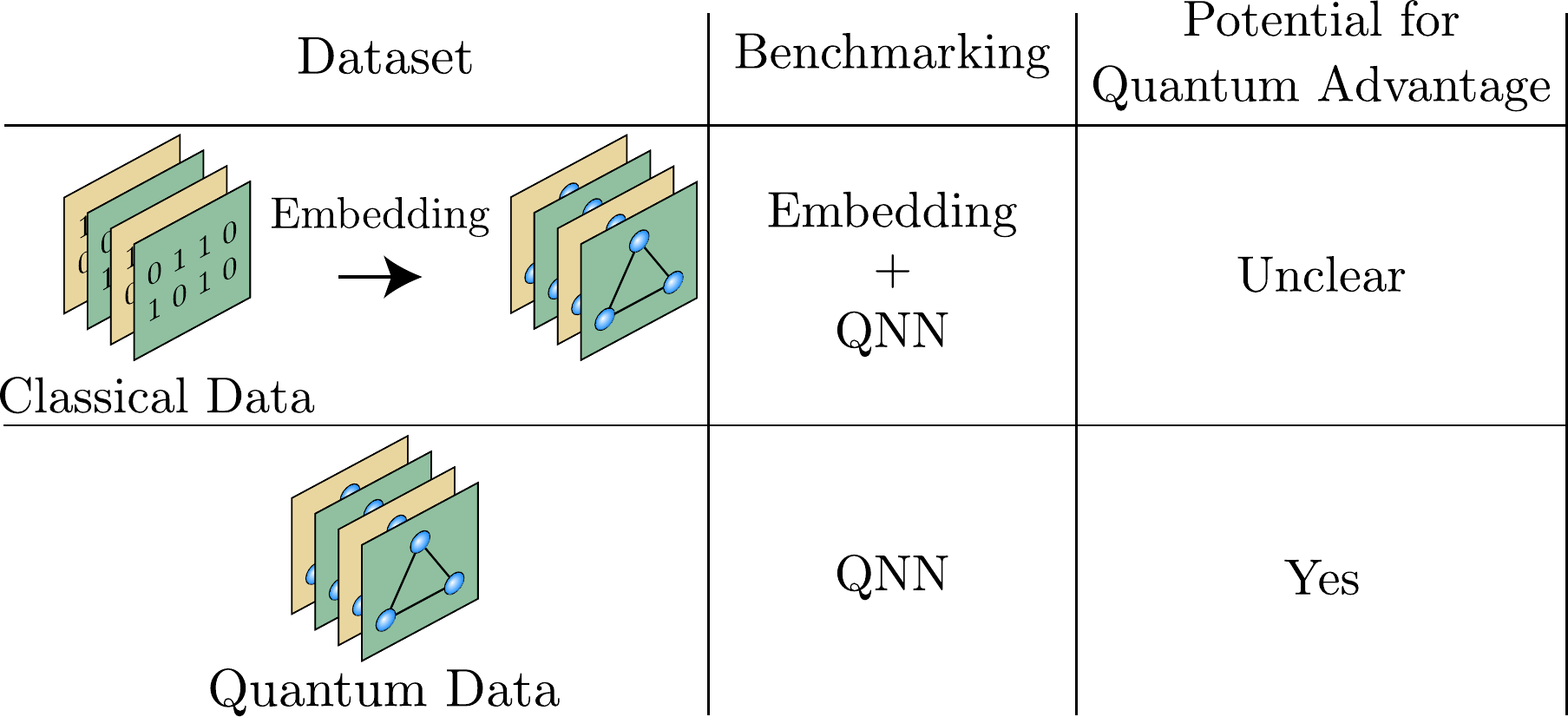}
\caption{\textbf{Classical versus quantum datasets.} Consider the task of benchmarking a QNN architecture by solving Eq.~\eqref{eq:optimization}. When using classical data, one is required to encode the data in a quantum state. Thus, a benchmarking experiment will simultaneously test the performance of the QNN and the embedding. This can lead to issues, as the choice of embedding can hinder one's ability to solve Eq.~\eqref{eq:optimization}. Moreover, it is still unclear whether a quantum advantage can be achieved with classical data. On the other hand, using quantum data, one only benchmarks the performance of the QNN by itself, and it is believed that a quantum advantage can indeed be reached using quantum data.}
\label{fig:2}
\end{figure}

Benchmarking machine learning models is a fundamental step to guaranteeing that  one can accurately optimize the model's parameters to achieve small training and generalization errors. In QML, the importance of efficient, large-scale benchmarking experiments is only amplified by the fact that QNNs can exhibit  trainability issues when scaled to large problem sizes.  For example, certain QNNs can exhibit the so-called barren plateau phenomenon~\cite{mcclean2018barren,cerezo2020cost,marrero2020entanglement,patti2020entanglement,larocca2021diagnosing,holmes2021connecting,wang2020noise,cerezo2020impact,holmes2021barren,arrasmith2020effect,arrasmith2021equivalence,wang2021can}, where the loss function gradients vanish exponentially with the system size. Since barren plateaus do not affect the trainability of small problem instances, it has been recognized that one needs to perform large scale implementations to benchmark the scaling of QNNs~\cite{sharma2020trainability}.

When benchmarking a QML model, one can test its parts separately and identify which aspect of the model can hinder its trainability. For instance, certain QNNs are more prone to barren plateaus than others~\cite{sharma2020trainability,pesah2020absence}, and measuring a large number of qubits can also be detrimental~\cite{cerezo2020cost}.

As shown in Fig.~\ref{fig:2}, with classical dataset one has  the added issue  that  the choice of embedding can also affect the trainability of the QNN parameters~\cite{thanaslip2021subtleties}. Moreover, since there is still an open debate regarding what the best encoding scheme $W(\vec{x})$ is~\cite{havlivcek2019supervised,perez2020data,lloyd2020quantum,abbas2020power,schuld2021effect,huang2021power}, it can be beneficial to benchmark QNNs and embeddings separately (e.g., first benchmark the QNN, and later benchmark the embedding).  In fact, many widely used embedding schemes are based on the fact that their outputs could be hard to classically simulate~\cite{havlivcek2019supervised}, but it is  still unclear if this criterion leads to practical and trainable  embeddings~\cite{thanaslip2021subtleties}. This problem can be avoided when using quantum datasets. Since there is no embedding, one can test the QNN by itself and check if it can be heuristically scaled to large problem sizes without incurring trainability issues. 

Finally, as previously mentioned, it is worth noting that it is still unclear whether QML can bring a quantum advantage when dealing with classical data. Recent results on quantum kernel-based QML models point towards the fact that a quantum advantage can be achieved using quantum data if one has access to the data generating process and if this process cannot be efficiently classically encoded~\cite{kubler2021inductive}. On the other hand, it is unclear whether such results can be applied to classical datasets~\cite{kubler2021inductive}. In addition, it has  been proved that, with some sampling assumptions, there exist classical algorithms performing as well (up to polynomial factors) as their quantum counterparts when dealing with classical data \cite{Tang2021}.

\subsection{Quantum datasets}

Let us here recall some quantum datasets that have been used in the literature to benchmark QML models. Perhaps the most natural way to produce quantum datasets is using the Variational Quantum Eigensolver (VQE) algorithm~\cite{peruzzo2014variational}, and forming the datasets from ground states of a given Hamiltonian. For instance, consider a Hamiltonian $H(p)$, where $p$ is a  tunable parameter such that there exists a critical value $p_c$ for which the ground states of $H(p)$ undergo a phase transition (the case when the Hamiltonian depends on multiple parameters  follows trivially). Then, one employs the VQE algorithm to find a parametrized unitary $V(\vec{\gamma})$ to solve the optimization problem
\begin{equation}
\vec{\gamma}_p=\argmin_{\vec{\gamma}} \bramatket{\vec{0}}{V(\vec{\gamma})H(p)V\ad(\vec{\gamma})}{\vec{0}}\,,   
\end{equation}
for different values of $p$. Here, each state in the dataset is prepared with a different unitary $V(\vec{\gamma}_p)$, and the state labels are $y_i=0$ ($y_i=1$) for all states with $p< p_c$  ($p> p_c$). Such VQE-based datasets have been formed with ground states of the Ising model~\cite{uvarov2020machine,banchi2021generalization}, the $XXZ$ Heisenberg spin chain~\cite{uvarov2020machine,huang2021provably}, Haldane chains~\cite{cong2019quantum}, Rydberg atom Hamiltonians~\cite{bernien2017probing,huang2021provably}, and the small molecules~\cite{bilkis2021semi,nakagawa2019github}.

A different approach to preparing quantum datasets is sampling output states from a given unitary. For instance, for binary classification one employs two unitaries, $V_0$ and $V_1$,  and prepares the quantum states in the dataset by applying each unitary to states in the computational basis. That is, all the states with label $y_i=0$  ($y_i=1$) are of the form $V_0\ket{\vec{z}_i}$ ($V_1\ket{\vec{z}_i}$), where $\ket{\vec{z}_1}$ are $n$-qubit states in the computational basis. This approach can be used to generate separable and mixed states~\cite{huang2021information}, and states generated as the output of hardware efficient ansatzes of different depths~\cite{grant2018hierarchical} (see Fig.~\ref{fig:depth_learning_ansatz} for a description of the hardware efficient ansatz). 

Finally, let us remark that recently, a collection of publicly available  quantum datasets was introduced in~\cite{perrier2021qdataset}. Therein, the authors create $52$  
datasets derived from simulating of one- and two-qubit systems evolving in the presence
and absence of noise. The work in Ref.~\cite{perrier2021qdataset} represents one of the first efforts to introduce quantum datasets. Nevertheless, the small-scale nature of these datasets still leaves open the need for scalable datasets that can benchmark QML models for large-scale problem sizes.

\section{NTangled quantum state dataset}

In this section we present the basis for a novel type of quantum dataset, called NTangled, that can be used to benchmark QML architectures. Our NTangled dataset is composed of quantum states with different amounts and different types of multipartite entanglement. 

\subsection{Multipartite Entanglement}

Entanglement has been recognized as one of the hallmarks of quantum systems~\cite{Einstein1935}. Moreover, it has also been shown that entanglement is a fundamental resource for quantum information processing tasks~\cite{horodecki2009quantum,gigena2020one,sharma2020reformulation}. Non-entangled product states are those that can be expressed as a tensor product of single-qubit states, i.e., $\ket{\psi}$ is a product state on $n$ qubits iff $\ket{\psi}=\bigotimes_{i=1}^n\ket{\psi_i}$. A pure state is said to be entangled if it cannot be written in this form. For bipartite systems, entanglement is well-understood and can be completely characterized via the Schmidt decomposition~\cite{ekert1998entangled,horodecki2009quantum,walter2016multipartite}. On the other hand, quantifying and characterizing multipartite entanglement proves far more complicated. For instance, already for a system of $3$ qubits it has been shown that there exists two different, and inequivalent, types
of genuine tripartite entanglement~\cite{dur2000three}.

While there are multiple ways to quantify the presence of multipartite entanglement in quantum states, we consider here the Concentratable Entanglement (CE) of Ref.~\cite{beckey2021computable}.  For a pure state $\ket{\psi}$, we here employ the following definition of the CE 
\begin{equation}
    \mathcal{C}(\ket{\psi})= 1-\frac{1}{2^n}\sum_{\alpha\in Q}\Tr[\rho_\alpha^2],
    \label{eq:CE}
\end{equation}
where $Q$ is the power set of the set $\{1,2,\ldots,n\}$, and where $\rho_\alpha$ is the reduced state of $\ket{\psi}$ in the subsystems labeled by the elements in $\alpha$ (here $\rho_\emptyset:=1$). As shown in~\cite{beckey2021computable}, unlike other multipartite entanglement measures, the CE  can be efficiently computed on a quantum computer via a generalized controlled SWAP test~\cite{beckey2021computable}. We refer the reader to the appendix for further details on how to compute the CE and for other properties of multipartite entanglement. Therein we further show how the $n$-tangle $\tau_n$~\cite{Wong2001} (another entanglement measure) can also be estimated, simultaneously, from the same circuit that estimates the CE.

As proved in the Appendix, we derive the following continuity bound for CE:
\begin{theorem}\label{theo:1}
    Given two $n$-qubit pure states $\psi=\dya{\psi}, \phi=\dya{\phi}$, the difference between their CE follows
    \begin{align}
        |C(\ket{\psi})-C(\ket{\phi})| &\leq \sqrt{2}(2-2^{2-n})D_{tr}(\psi,\phi)\\
        & \leq 2\sqrt{2}D_{tr}(\psi,\phi)\,,
        \label{eq:CE_Theorem}
    \end{align}
where, for matrices $A$ and $B$, $D_{tr}(A,B)=\frac{1}{2}\Tr[\sqrt{(A-B)\ad (A-B)}]$ denotes the trace distance.
\end{theorem}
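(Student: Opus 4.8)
The plan is to bound the difference of the two CEs term by term in the definition \eqref{eq:CE}, after first discarding the subsets $\alpha$ that contribute identically to $\mathcal{C}(\ket{\psi})$ and $\mathcal{C}(\ket{\phi})$. Writing $\rho_\alpha,\sigma_\alpha$ for the reduced states of $\psi,\phi$ on $\alpha$, the constant term cancels and
\begin{equation}
\mathcal{C}(\ket{\psi})-\mathcal{C}(\ket{\phi})=\frac{1}{2^n}\sum_{\alpha\in Q}\left(\Tr[\sigma_\alpha^2]-\Tr[\rho_\alpha^2]\right).
\end{equation}
The crucial first step is to notice that two of the $2^n$ summands vanish identically: for $\alpha=\emptyset$ one has $\Tr[\rho_\emptyset^2]=\Tr[\sigma_\emptyset^2]=1$ by the convention $\rho_\emptyset:=1$, and for the full set $\alpha=\{1,\dots,n\}$ the purity equals $1$ for both states because $\psi$ and $\phi$ are pure. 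Hence only the $2^n-2$ proper, nonempty subsets survive, and this is exactly what will produce the factor $(2^n-2)/2^n$ in the sharp bound.

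Next I would rewrite each surviving purity as a two-copy swap expectation. Using the standard identity $\Tr[\rho_\alpha^2]=\Tr[\mathrm{SWAP}_\alpha\,\psi^{\ot 2}]$ --- where $\mathrm{SWAP}_\alpha$ swaps the qubits in $\alpha$ between the two copies and acts as the identity elsewhere, the same object underlying the generalized swap test of \cite{beckey2021computable} --- gives
\begin{equation}
\Tr[\rho_\alpha^2]-\Tr[\sigma_\alpha^2]=\Tr\!\left[\mathrm{SWAP}_\alpha\,(\psi^{\ot 2}-\phi^{\ot 2})\right].
\end{equation}
Since $\mathrm{SWAP}_\alpha$ is unitary, $\|\mathrm{SWAP}_\alpha\|_\infty=1$, so the matrix H\"older inequality $|\Tr[AB]|\leq\|A\|_\infty\|B\|_1$ bounds each term by $\|\psi^{\ot 2}-\phi^{\ot 2}\|_1=2\,D_{tr}(\psi^{\ot 2},\phi^{\ot 2})$. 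The point of passing to two copies is that it buys the factor $\sqrt2$: for pure states $D_{tr}=\sqrt{1-F}$ with $F=|\ip{\psi}{\phi}|^2$, and since the fidelity of the doubled states is $F^2$, one has $D_{tr}(\psi^{\ot 2},\phi^{\ot 2})=\sqrt{1-F^2}=\sqrt{1-F}\,\sqrt{1+F}\leq\sqrt2\,D_{tr}(\psi,\phi)$. Thus each surviving term is at most $2\sqrt2\,D_{tr}(\psi,\phi)$.

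Finally I would collect the pieces: summing the $2^n-2$ surviving terms and dividing by $2^n$ yields
\begin{equation}
|\mathcal{C}(\ket{\psi})-\mathcal{C}(\ket{\phi})|\leq\frac{2^n-2}{2^n}\,2\sqrt2\,D_{tr}(\psi,\phi)=\sqrt2\,(2-2^{2-n})\,D_{tr}(\psi,\phi),
\end{equation}
and the second, looser inequality follows immediately from $2-2^{2-n}\leq2$. The routine parts are the norm inequality and the pure-state fidelity computation; the one genuinely load-bearing observation --- and the step I expect to be the crux --- is the swap reformulation together with the cancellation of the $\emptyset$ and full-set terms, since it is precisely this that upgrades the naive constant $4$ (obtained from the cruder split $\Tr[\rho_\alpha^2]-\Tr[\sigma_\alpha^2]=\Tr[(\rho_\alpha-\sigma_\alpha)(\rho_\alpha+\sigma_\alpha)]$ together with contractivity of trace distance under partial trace) down to the stated $\sqrt2(2-2^{2-n})$.
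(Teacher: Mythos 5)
Your proof is correct and reaches exactly the stated constant, but the per-term bound works by a genuinely different mechanism than the paper's. Both arguments share the opening move: discard the $\alpha=\emptyset$ and $\alpha=\{1,\dots,n\}$ terms (each purity is $1$ for pure states), leaving $2^n-2$ summands and hence the factor $2-2^{2-n}$. From there the paper stays with single copies: it writes $\Tr[\psi_s^2-\phi_s^2]=\Tr[(\psi_s+\phi_s)(\psi_s-\phi_s)]$, applies Cauchy--Schwarz in the Hilbert--Schmidt inner product, uses $\|\psi_s\|_2,\|\phi_s\|_2\leq 1$, then the inequality $\|\psi_s-\phi_s\|_2\leq\tfrac{1}{\sqrt{2}}\|\psi_s-\phi_s\|_1$ (valid for differences of density matrices), and finishes with monotonicity of the trace norm under partial trace. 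You instead lift everything to two copies via $\Tr[\rho_\alpha^2]=\Tr[\mathrm{SWAP}_\alpha\,\psi^{\ot 2}]$, apply H\"older against the unitary swap, and extract the $\sqrt{2}$ from the pure-state relation $D_{tr}(\psi^{\ot 2},\phi^{\ot 2})=\sqrt{1-F^2}\leq\sqrt{2}\,D_{tr}(\psi,\phi)$; all of these steps are sound. The trade-offs: the paper's per-term estimate never invokes global purity (purity enters only through the cancellation), so it survives verbatim for mixed global states, whereas your fidelity step is tied to pure states; on the other hand your route hews closer to the operational swap-test picture of CE, and your own ingredients can in fact be aggregated more cleverly --- since $\tfrac{1}{2^n}\sum_{\alpha}\mathrm{SWAP}_\alpha=\prod_{i=1}^n\tfrac{1}{2}(I+S_i)$ is a projector $P$, one has $1-C(\ket{\psi})=\Tr[P\,\psi^{\ot 2}]$, whence $|C(\ket{\psi})-C(\ket{\phi})|\leq D_{tr}(\psi^{\ot 2},\phi^{\ot 2})\leq\sqrt{2}\,D_{tr}(\psi,\phi)$, which is strictly sharper than the theorem's constant. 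One small correction to your closing remark: the split $\Tr[(\rho_\alpha-\sigma_\alpha)(\rho_\alpha+\sigma_\alpha)]$ that you call naive is precisely what the paper uses, but paired with the $2$-norm inequalities above rather than with H\"older against $\|\rho_\alpha+\sigma_\alpha\|_\infty$, so it also attains $\sqrt{2}(2-2^{2-n})$ rather than the constant $4$ you attribute to it.
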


Below we discuss how Theorem~\ref{theo:1} has direct implications for training a generative model that outputs states with similar CE values.

\subsection{Generation of the NTangled dataset}

Here we consider the task of generating an NTangled dataset by training a QNN model to output states with a desired value $\xi$ of CE. Specifically, given a parametrized quantum circuit $V_G(\vec{\alpha})$ (here the sub-index $G$ denotes the fact that the QNN generates the dataset), and given a probability distribution of quantum  states $\PC$, the goal is to train the parameters $\vec{\alpha}$ such that
\begin{equation}
    \mathop{\mathbb{E}}_{(\ket{\psi^{(\tin)}})\sim\PC}\left[C\left(\ket{\psi^{(\tout)}}\right)\right]=\xi\,,
\end{equation}
where we have defined $\ket{\psi^{(\tout)}}=V_G(\vec{\alpha})\ket{\psi^{(\tin)}}$.

The parameters $\vec{\alpha}$ are trained by minimizing the mean squared loss function over a training set $\TC$ of $N$ input states sampled from $\PC$:
\begin{equation}\label{eq:loss-generation}
    L(\vec{\alpha})=\frac{1}{N}\sum_{\ket{\psi_i^{(\tin)}}\in\TC} \left(C\left(\ket{\psi_i^{(\tout)}}\right)-\xi\right)^2\,.
\end{equation}
Finally, once the parameters are trained, we evaluate the generalization error of the QNN  by considering as a success the cases when the output state has a CE value in $[\xi-\delta,\xi +\delta]$, where $\delta$ is a fixed allowed error value. 

We here note that, as previously mentioned, there are different non-equivalent types of multipartite entangled states when $n> 2$. Moreover, while the CE can quantify the amount of entanglement in a state, it does not characterize the type of entanglement. For instance, both $W$-type states and $GHZ$-type states can have the same amount of CE despite their entanglement properties being different. Thus, if one wished to bias the output probability distribution so that a type of state is more probable than another type, then one can add a term in the loss function that penalizes undesirable types. As a specific example, one can reward or penalize the presence of $W$-type states by modifying Eq.~\eqref{eq:loss-generation} as
\begin{align}\label{eq:loss-generation-tangle}
\begin{split}
    L(\vec{\alpha},c_1,c_2)=\frac{1}{N}\sum_{\ket{\psi_i^{(\tin)}}\in\TC}\Big[&c_1 \left(C\left(\ket{\psi_i^{(\tout)}}\right)-\xi\right)^2\\
    &+c_2\tau_n\left(\ket{\psi_i^{(\tout)}}\right)\Big]\,,
\end{split}
\end{align}
where $c_1$ and $c_2$ are real numbers, and where $\tau_n\left(\ket{\psi_i^{(\tout)}}\right)$ is the $n$-tangle of the output state. Here we recall that $\tau_n\left(\ket{W}\right)=0$ for a $W$-state \cite{wong2001potential}.

A crucial aspect for generating the NTangled dataset is the choice of probability distribution $\PC$ from which one samples the input states for training the QNN  $V_G(\vec{\alpha})$. First, let us note that it is not obvious that one can train a QNN to produce states with arbitrary levels on entanglement by letting $\PC$ be an arbitrary (or random) set of states. Hence, we here choose $\PC$ to be composed of states that are equivalent under local operations~\cite{horodecki2009quantum}. This will guarantee that these states  have the same amount of initial multipartite entanglement and that $\PC$ has measure zero in the Hilbert space. 

Let us now describe the approaches we used for choosing $\PC$. A first choice is having $\PC$ be the set of computational basis states. While this choice is natural, it leads to the issue that while the size of this set is $2^n$, for small problem sizes one is not able to generate a sufficiently large number of states (e.g. for $n=3$ qubits, the set only contains $8$ states). Hence, an alternative here is setting $\PC$ to be the set of product states. The main advantage here is that the set of product states is continuous and thus of infinite cardinality for all problem sizes. 

Let us finally remark that a third alternative is to train the generative QNN on computational basis states, but then test its generalization on more general product states. Note that this approach is different from the usual machine learning scheme where the training and testing set are sampled from the same probability distribution. 
However, as shown in the numerical results section, this approach leads to low generalization errors, and thus also constitutes a viable option.

\begin{figure}[t]
    \centering
    \includegraphics[width=1\columnwidth]{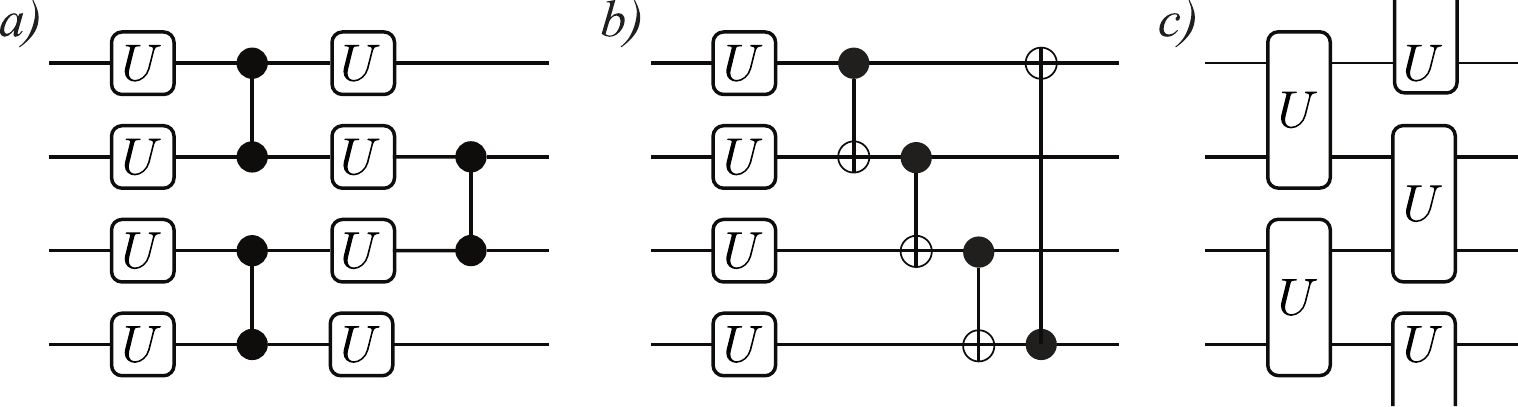}
    \caption{\textbf{Ansatzes for the generator QNN  $V_G(\vec{\alpha})$.} Shown are examples for generating entangled states on $n=4$ qubits. (a) A single layer of the HWE ansatz is composed of arbitrary single qubit unitaries followed by CZ gates acting on neighbouring pairs of qubits. (b) A single layer of the SEA ansatz contains arbitrary single qubit unitaries with a wrapping ladder of CNOTs. (c) A single layer of the CONV ansatz is composed of general two qubit gates acting on alternating pairs of qubits in a brick-like fashion. Note that this ansatz contains the HWE as a special case. }
    \label{fig:generator_ansatzes}
    
\end{figure}

We finally note that one can further improve the performance of the generative QNN using Theorem~\ref{theo:1} and letting $\PC$ sample states in an $\varepsilon$-Ball. Specifically, consider the following proposition.
\begin{proposition}\label{prop:1}
Let $\PC$ be a probability distribution of quantum states in an $\varepsilon$-Ball. That is, for any two pure quantum states $\psi=\dya{\psi}$ and $\phi=\dya{\phi}$ sampled from $\PC$, their trace distance is such that $D_{tr}(\psi,\phi)\leq \varepsilon$. Then, for any two states $\psi=\dya{\psi}$, $\phi=\dya{\phi}$ sampled from $\PC$ and sent through the  QNN $V_G(\vec{\alpha})$, their CE difference is upper bounded as 
\begin{equation}
    |C(V_G(\vec{\alpha})\ket{\psi})-C(V_G(\vec{\alpha})\ket{\phi})|\leq 2\sqrt{2}\varepsilon.
\end{equation}
\end{proposition}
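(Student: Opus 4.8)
The plan is to combine the continuity bound of Theorem~\ref{theo:1} with the fact that the trace distance is invariant under unitary conjugation. These two ingredients are all that is needed, so the argument is short and requires no new estimates.

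First I would pass to the output density matrices $\psi^{(\tout)} = V_G(\vec{\alpha})\,\psi\, V_G\ad(\vec{\alpha})$ and $\phi^{(\tout)} = V_G(\vec{\alpha})\,\phi\, V_G\ad(\vec{\alpha})$, which are again pure $n$-qubit states because $V_G(\vec{\alpha})$ is unitary. Theorem~\ref{theo:1} holds for any pair of pure states, so in particular it applies to the outputs and gives
\begin{equation}
|C(V_G(\vec{\alpha})\ket{\psi}) - C(V_G(\vec{\alpha})\ket{\phi})| \leq 2\sqrt{2}\, D_{tr}(\psi^{(\tout)}, \phi^{(\tout)})\,.
\end{equation}
Next I would invoke unitary invariance of the trace distance: for any unitary $U$ and any $A,B$ one has $D_{tr}(UAU\ad, UBU\ad) = D_{tr}(A,B)$, since $UAU\ad - UBU\ad = U(A-B)U\ad$ and conjugation by $U$ leaves the singular values of $A-B$, and hence the trace norm, unchanged. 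Applying this with $U = V_G(\vec{\alpha})$ yields $D_{tr}(\psi^{(\tout)},\phi^{(\tout)}) = D_{tr}(\psi,\phi)$, which by hypothesis is at most $\varepsilon$ because both $\psi$ and $\phi$ are drawn from the $\varepsilon$-ball distribution $\PC$. Chaining the two bounds gives $|C(V_G(\vec{\alpha})\ket{\psi}) - C(V_G(\vec{\alpha})\ket{\phi})| \leq 2\sqrt{2}\varepsilon$, as claimed.

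Since every step is a direct substitution, I do not expect a genuine obstacle here; the proposition is essentially a corollary of Theorem~\ref{theo:1}. The only point worth double-checking is that one uses the looser, dimension-independent constant $2\sqrt{2}$ from the second line of Theorem~\ref{theo:1} rather than the sharper prefactor $\sqrt{2}(2-2^{2-n})$, so that the stated bound holds uniformly in the number of qubits $n$.
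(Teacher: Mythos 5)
Your proof is correct and follows essentially the same route as the paper's: apply the unitary invariance of the trace distance, $D_{tr}(\psi,\phi)=D_{tr}(V_G(\vec{\alpha})\psi V_G\ad(\vec{\alpha}),V_G(\vec{\alpha})\phi V_G\ad(\vec{\alpha}))$, and then invoke the dimension-independent bound of Theorem~\ref{theo:1}. Your explicit justification of the invariance step and the remark about using the looser constant $2\sqrt{2}$ are fine additions but do not change the argument.
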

\begin{proof}
    The proof of Proposition~\ref{prop:1} follows by noting that the trace distance is preserved by the action of the unitary QNN, i.e., $D_{tr}(\psi,\phi)=D_{tr}(V_G(\vec{\alpha}){\psi}V_G\ad(\vec{\alpha}),V_G(\vec{\alpha})\phi V_G\ad(\vec{\alpha}))$, and by applying Theorem~\ref{theo:1}.
\end{proof}
Proposition~\ref{prop:1} implies that the distribution of CE entanglements has a width that can be upper bounded by the width of the $\varepsilon$-Ball from which  $\PC$ samples. 

Here, we find it convenient to note that the NTangled dataset presented in the Github of~\cite{schatzki2021github} is composed of the trained parameters in $V_G(\vec{\alpha})$ plus a description of what the distribution $\PC$ is. In this sense, users can sample states $\ket{\psi^{(\tin)}}$ from $\PC$ and generate entangled states as  $\ket{\psi^{(\tout)}}=V_G(\vec{\alpha})\ket{\psi^{(\tin)}}$.

\begin{figure}[t]
    \includegraphics[width=1\linewidth]{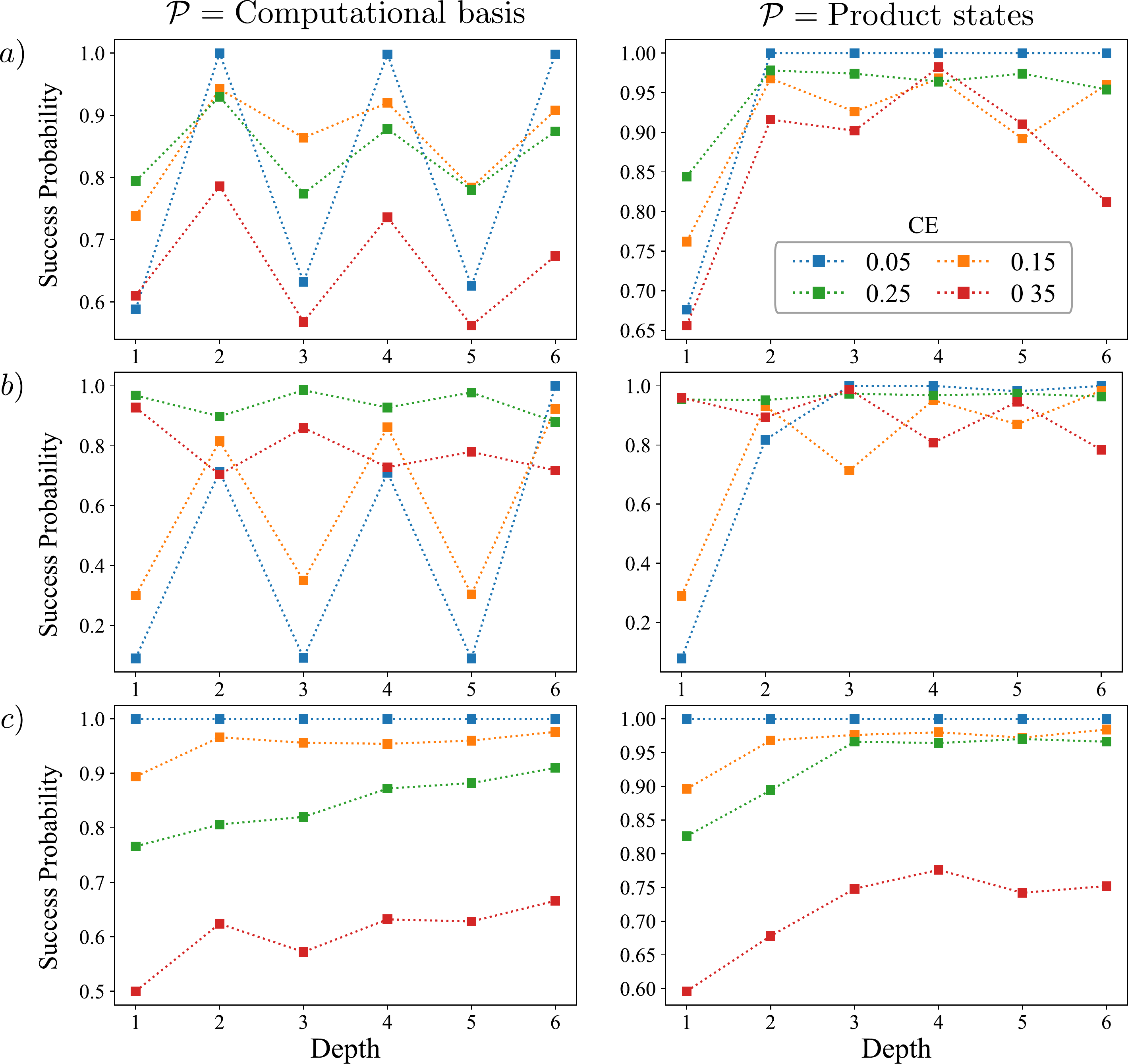}
    \caption{\textbf{Success probability for different generator QNN ansatzes with varying depths.} Here we show the success probability for generating $3$ qubit states in the NTangled dataset with entanglement $\xi\in \{0.05, 0.15, 0.25, 0.35\}$,   allowed error $\delta=0.1$, and when using the HWE (a), SEA (b), and CONV (c) ansatzes. The left column  corresponds to training with computational basis states, while the right column corresponds to training with product states. In both cases, we tested on product states. }
    \label{fig:3q_success_rates}
\end{figure}
\section{Numerical Simulations}
Here we present numerical results for training a QNN to generate the states in the NTangled dataset. We then use the NTangled dataset to train a classifying QML model. Finally, we present results for classifying states obtained from hardware efficient (HWE) quantum circuits (see Fig.~\ref{fig:generator_ansatzes}) with different depths. 

In the Appendix and in the Github repository of Ref.~\cite{schatzki2021github}, we present a description of the ansatz for the QNNs that generate the NTangled dataset, and we also include the value of trained parameters. Taken together, the results should allow the reader to use the NTangled dataset and reproduce our results. 

Here we note that for all numerical simulations in this section we employed the ADAM optimizer to train the QNN parameters~\cite{kingma2015adam}. Moreover,  the simulations were performed in the absence of hardware and shot noise through Tensorflow Quantum/Cirq~\cite{broughton2020tensorflow} and Pennylane~\cite{bergholm2018pennylane}. Further analysis of entanglement properties were performed using QuTIP~\cite{johansson2012nation}.

\subsection{Entangled State Generation}

Let us first introduce the ansatzes for the generator QNN  $V_G(\vec{\alpha})$. As shown in  Fig.~\ref{fig:generator_ansatzes}, we consider three different architectures. For simplicity we refer to these ansatzes as (a) the hardware-efficient ansatz (HWE), (b) the strongly-entangling ansatz (SEA) \cite{Schuld2020}, and (c) the convolutional ansatz (CONV). In the figure, each box labeled with a $U$ represents an arbitrary one- or two-qubit gate. The one-qubit unitaries contain $3$ parameters, while the two-qubit ones are parametrized with $15$ rotation angles~\cite{vatan2004optimal}. We present in the Appendix the specific gate decomposition of these general unitaries. 

\subsubsection{Three qubit NTangled dataset}

Here we consider  the task of generating an NTangled dataset on $3$ qubits. For this purpose, we employ the loss function of  Eq.~\eqref{eq:loss-generation}. Then, we set the value of the allowed error $\delta=0.1$, so that the generator QNN was able to successful generate an output state with entanglement $\xi$  if $|\mathcal{C}(\ket{\psi_i^{(\tout)}})-\xi|\leq 0.1$.

\begin{figure*}[t!]
\centering
\includegraphics[width=1\linewidth]{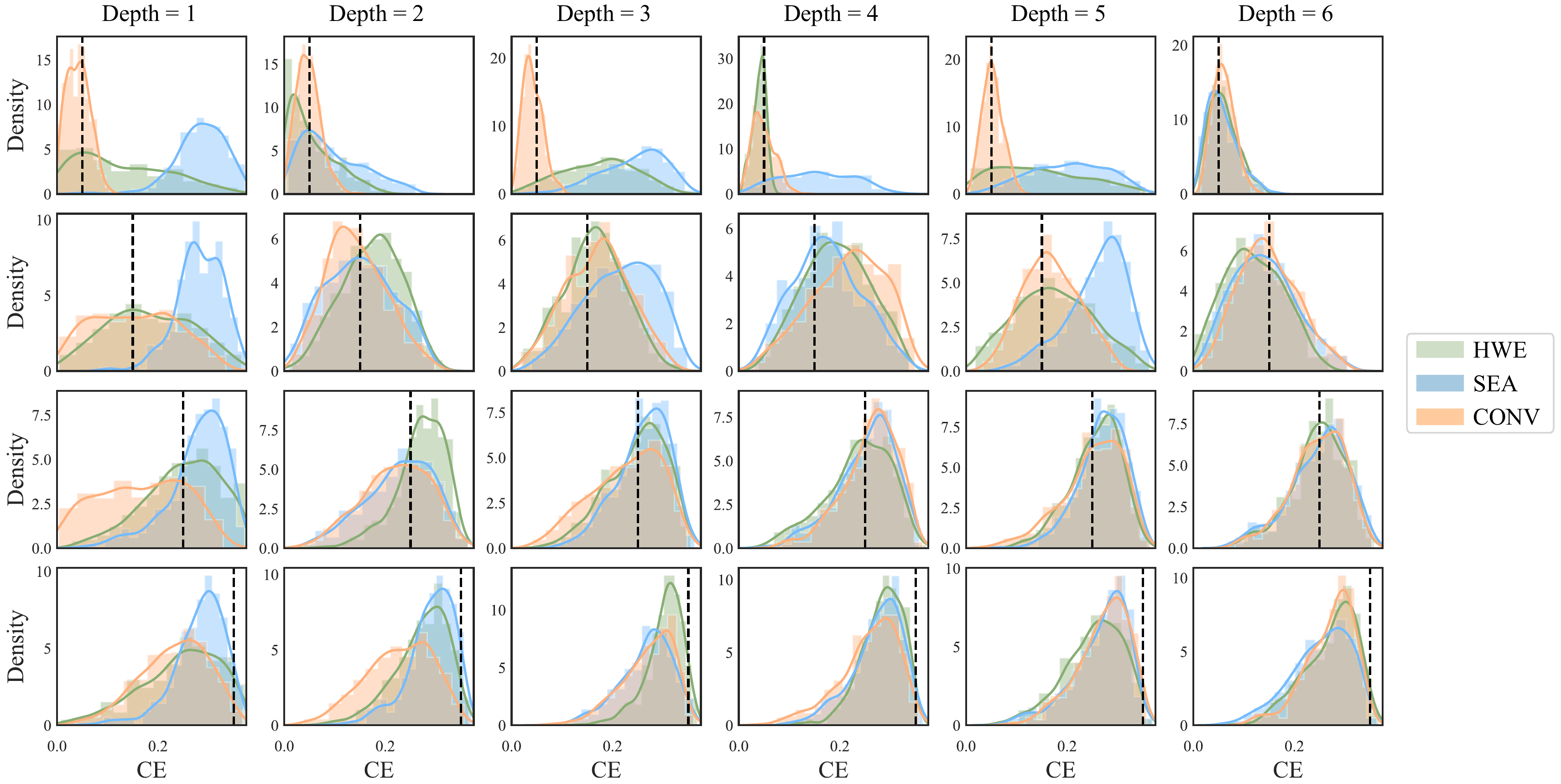}
    \caption{\textbf{Distribution of entanglement for the $3$ qubit NTangled dataset.} We show the probability density function of the output state's CE for the same models  of Fig.~\ref{fig:3q_success_rates} when trained on product states. Each color repents a different ansatz. Plots in the same column correspond to ansatz with the same depth (indicated above), while plots in the same row correspond to training with a given target CE calue, $\xi \in \{0.05, 0.15, 0.25, 0.35\}$ (indicated with a vertical dashed line). We refer the reader to the Appendix for a similar plot showing results obtained when training with input computational basis states.}
    \label{fig:3q_hwe_ps_entanglements}
\end{figure*}

In Fig.~\ref{fig:3q_success_rates} we present results for training the QNNs in Fig.~\ref{fig:generator_ansatzes} with varying depths (i.e., number of layers) to generate datasets composed of states with CEs $\xi \in \{0.05, 0.15, 0.25, 0.35\}$.  Then, as proved in the Appendix, the following result  holds.
\begin{theorem}\label{thm:3q_ce}
    Any 3 qubit state $\ket{\psi}$ with $\mathcal{C}(\ket{\psi}) > 0.25$ has true multipartite entanglement.
\end{theorem}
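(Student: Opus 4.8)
The plan is to reduce the statement to a single inequality among the one-qubit purities of $\ket{\psi}$. First I would specialize the definition in Eq.~\eqref{eq:CE} to $n=3$. The power set $Q$ has eight elements, and for a pure state the purity of any subsystem equals that of its complement, $\Tr[\rho_\alpha^2]=\Tr[\rho_{\overline{\alpha}}^2]$. Since $\Tr[\rho_\emptyset^2]=\Tr[\rho_{\{1,2,3\}}^2]=1$, the eight terms collapse to
\begin{equation}
    \mathcal{C}(\ket{\psi})=\frac{3}{4}-\frac{1}{4}\left(\Tr[\rho_1^2]+\Tr[\rho_2^2]+\Tr[\rho_3^2]\right),
\end{equation}
where $\rho_i$ is the single-qubit reduced state on qubit $i$. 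The hypothesis $\mathcal{C}(\ket{\psi})>1/4$ is therefore \emph{equivalent} to $\Tr[\rho_1^2]+\Tr[\rho_2^2]+\Tr[\rho_3^2]<2$, and the whole theorem becomes a statement purely about marginal purities.

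Next I would invoke the standard characterization of genuine multipartite (``true'') entanglement for pure tripartite states: $\ket{\psi}$ is genuinely entangled precisely when it is entangled across every bipartition $i\,|\,jk$, which for a pure state means that each reduction $\rho_i$ is mixed, i.e.\ $\Tr[\rho_i^2]<1$ for all three qubits. Equivalently, $\ket{\psi}$ fails to be genuinely entangled exactly when at least one single-qubit reduction is pure.

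I would then establish the contrapositive. Suppose $\ket{\psi}$ is not genuinely multipartite entangled, so some reduction, say $\rho_1$, is pure. Then $\ket{\psi}=\ket{\phi}_1\otimes\ket{\chi}_{23}$ factorizes across the $1\,|\,23$ cut, giving $\Tr[\rho_1^2]=1$. The remaining qubits form a pure two-qubit state $\ket{\chi}_{23}$, for which $\Tr[\rho_2^2]=\Tr[\rho_3^2]$ by the Schmidt decomposition, and each single-qubit purity of a two-qubit pure state is at least $1/2$. Hence $\Tr[\rho_2^2]+\Tr[\rho_3^2]\geq 1$, so $\sum_i\Tr[\rho_i^2]\geq 2$ and $\mathcal{C}(\ket{\psi})\leq 1/4$. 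This is exactly the negation of the hypothesis.

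The computation is elementary and I do not anticipate a serious obstacle; the only points requiring care are the purity-complement identity that makes the eight-term sum collapse, and pinning down the entanglement characterization so the boundary case is treated correctly. Indeed the bound is tight: the biseparable state $\ket{0}\otimes\ket{\Phi^+}$ (a product qubit times a Bell pair) saturates $\sum_i\Tr[\rho_i^2]=2$ and thus has $\mathcal{C}=1/4$ exactly, which is precisely why the \emph{strict} inequality $\mathcal{C}(\ket{\psi})>1/4$ is needed in the statement.
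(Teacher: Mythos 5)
Your proposal is correct and takes essentially the same route as the paper's proof: both argue by contrapositive, use the purity-complement identity $\Tr[\rho_\alpha^2]=\Tr[\rho_{\overline{\alpha}}^2]$ to reduce the CE to single-qubit purities, and conclude from the fact that a pure two-qubit state's marginal purity is at least $1/2$ (equivalently, that the maximal two-qubit CE is $1/4$). Your explicit formula $\mathcal{C}(\ket{\psi})=\tfrac{3}{4}-\tfrac{1}{4}\sum_i\Tr[\rho_i^2]$ and the tightness example $\ket{0}\otimes\ket{\Phi^+}$ are just a cleaner packaging of the same argument.
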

Theorem~\ref{thm:3q_ce} allows us to guarantee that some of the  states in the NTangled datasets generated  for $\xi=0.25,0.35$ have true multipartite entanglement. That is, no qubit in the state is in a product state with the others.

In the left column of Fig.~\ref{fig:3q_success_rates} we show results for training with input states sampled from the computational basis, while on the right column we trained with input states randomly sampled from the set of product states. For computational basis states, we trained on $n+1=4$ states (one hot encoded states plus the all zero state), while for product states we trained in $n^2+1=10$ states.  In all cases, the success probabilities were tested on  $500$ random product states. Moreover, for each tuple of ansatz, depth, and goal CE, we trained $50$ models and selected the one achieved the smallest training error. We note that the case when we trained and tested on computational basis input states (not shown in Fig.~\ref{fig:3q_success_rates}), we can achieve $100\%$ generation success (zero training and generalization error).

As seen in  Fig.~\ref{fig:3q_success_rates}(a) and (b), the HWE and SEA have an oscillating behaviour in the number of layers, with an even number of layers outperforming odd number of layers. Moreover, as expected, we find that for all ansatzes, training and testing on the same set leads to higher success probability (right column). However, we note that one can achieve remarkable performances (with $>80\% $ success probability) by training on computational basis states, and then testing on product states (left column). Finally, here we can see that the SEA generally outperforms the HWE and CONV ansatzes for states with large amounts of entanglement. This shows that depending on the amount of entanglement one wishes to produce, it might be convenient to use one ansatz over the other.  Hence, for all targeted CE values we can choose an ansatz such that the success probability is always above $95\%$.

Here we note that Fig.~\ref{fig:3q_success_rates} only shows the probability that the CE of the generated state is within some error $\delta$ of the targeted entanglement value. To better understand the entanglement distribution of the generated states, we show in Fig.~\ref{fig:3q_hwe_ps_entanglements} the probability density function of the output state's CE for each tuple of ansatz, depth, and goal CE. Here we can clearly see that in most cases the densities are always peaked around the desired CE value (indicated by a dashed vertical line). For low entanglement states, the CONV ansatz performs the best (irrespective of the depth).  On the other other hand, as expected, we can see that for high entanglement, the SEA achieves highly peaked distributions already at one layer.  Finally, we remark  that for six layers the distributions of all ansatzes are similar. This is due to the fact that after a certain depth  the expressibility~\cite{holmes2021connecting} of  the parametrized quantum circuits is about the same.

\subsubsection{Four qubit NTangled dataset}

Here we consider the task of generating a $4$ qubit NTangled dataset. Similarly to the previous case, we train a generator QNN for datasets with CE $\xi\in\{0.05, 0.15, 0.25, 0.35\}$, and we again consider an allowed error of $\delta=0.1$. 

In Fig.~\ref{fig:4q_success_rates} we show results for training a HWE and SEA ansatz. The QNNs in the left column of Fig.~\ref{fig:4q_success_rates}  were trained with input states sampled from the computational basis, while on the right column they were trained with input states randomly sampled from the set of product states. For computational basis states, we again trained on $n+1=5$ states (one hot encoded states plus the all zero state), while for product states we trained in $n^2+1=17$ states.  In all cases, the success probabilities were tested on  $500$ random product states. Similarly to the $3$ qubit dataset, for each tuple of ansatz, depth, and goal CE, we trained $50$ models and selected the one achieved the smallest training error. 

\begin{figure}[t]
    \includegraphics[width=\columnwidth]{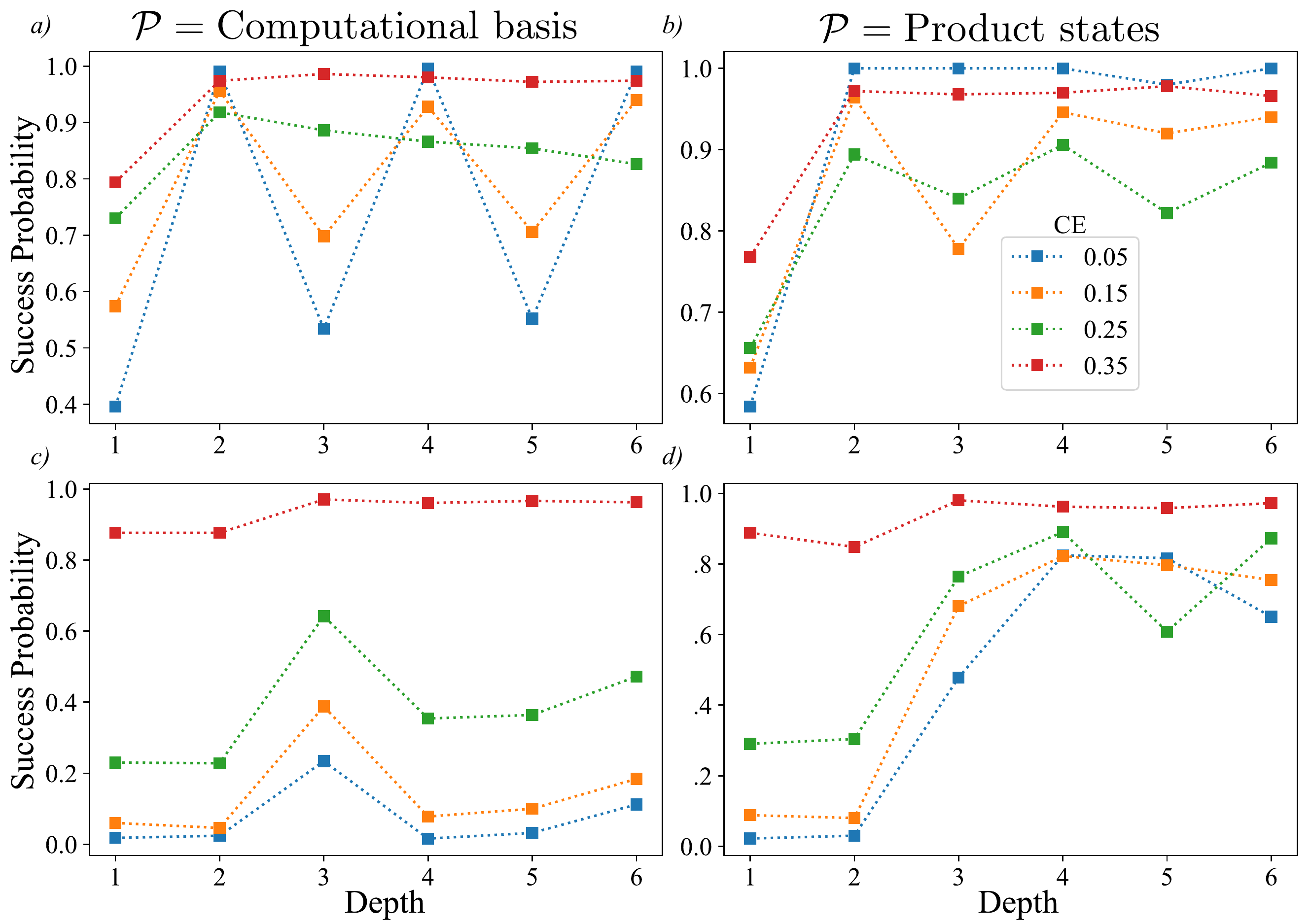}
    \caption{\textbf{Success probability for different generator QNN ansatzes with varying depths.} Here we show the success probability for generating $4$ qubit states in the NTangled dataset with entanglement $\xi\in \{0.05, 0.15, 0.25, 0.35\}$,  allowed error $\delta=0.1$, and when using the HWE (a), and SEA (b) ansatzes.  The left column  corresponds to training with computational basis states, while the right column to training with product states. }
    \label{fig:4q_success_rates}
\end{figure}

First, let us remark that when training and testing on computational basis states, we can obtain $100\%$ success probability. Moreover, we again see that the SEA ansatz strongly biases the distributions to highly entangled states.  On the other hand, the HWE with two or more layers is again able to produce states with different amounts of entanglement with high probability, specially when training and testing on random product states. Finally, by appropriately choosing the ansatz and depth, we can obtain states with a desired value of entanglement with probabilities larger than $91\%$. 

In the Appendix we show the entanglement distribution for the states obtained with the ansatzes of Fig.~\ref{fig:4q_success_rates}. There, we again can see that the distributions are always peaked around the desired value of CE.

\begin{figure}[t]
    \centering
    \includegraphics[width=1\columnwidth]{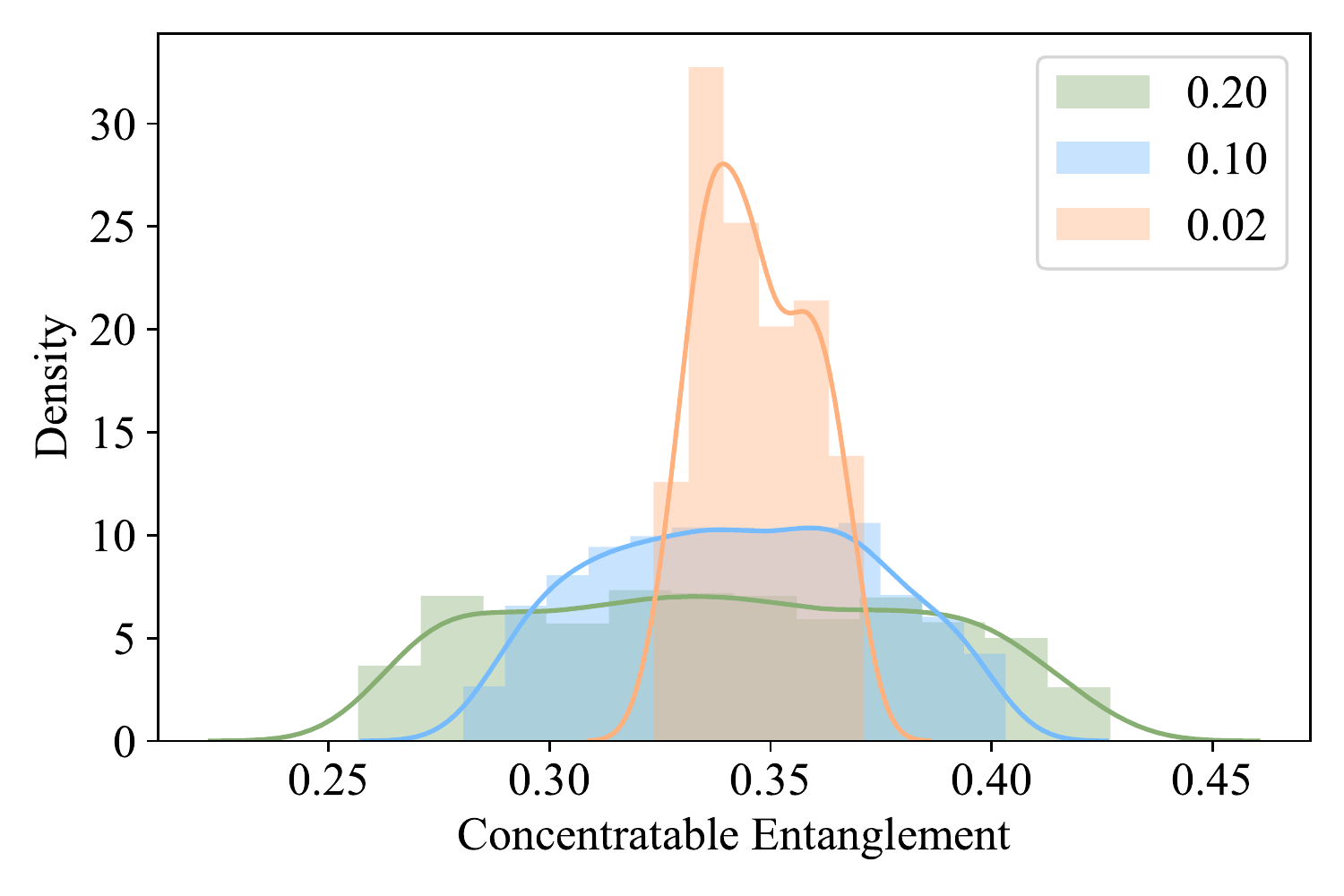}
    \caption{\textbf{Concentratable entanglement of states in a $\varepsilon$-Ball.} Here we trained a 4 qubit HWE QNN (on the computational basis) to output states with $\zeta = 0.35$. The inpu states to the QNN where obtained in a $\varepsilon$-Ball by applying small local rotations to a fixed input $\ket{\textbf{0}}$. We considered $\varepsilon\in\{0.2,0.1,0.02\}$. }
    \label{fig:eps-ball}
\end{figure}

Finally, in Fig.~\ref{fig:eps-ball} we present results which use Proposition~\ref{prop:1} to tighten the generated states CE distribution. Namely, we sent as input to a HWE QNN a set of states within an $\varepsilon$-Ball of the all-zero state $\ket{\vec{0}}$. Here we can see that, as expected, when $\varepsilon$ decreases the CE distribution tightens and becomes more peaked.

\subsubsection{Eight and twelve qubit NTangled dataset}

We now move on to larger dataset sizes:  $8$ and $12$ qubits. Unlike the $3$ and $4$ qubit NTangled datasets, here the Hilbert space is sufficiently large that one can train and test only on computational basis states. Moreover, we here again consider training on a polynomial number of input states. Surprisingly, as we see below, training on these small training sets still allow us to obtain good generalization results in the exponentially large Hilbert space.

\begin{figure}[t]
    \centering
    \includegraphics[width=\columnwidth]{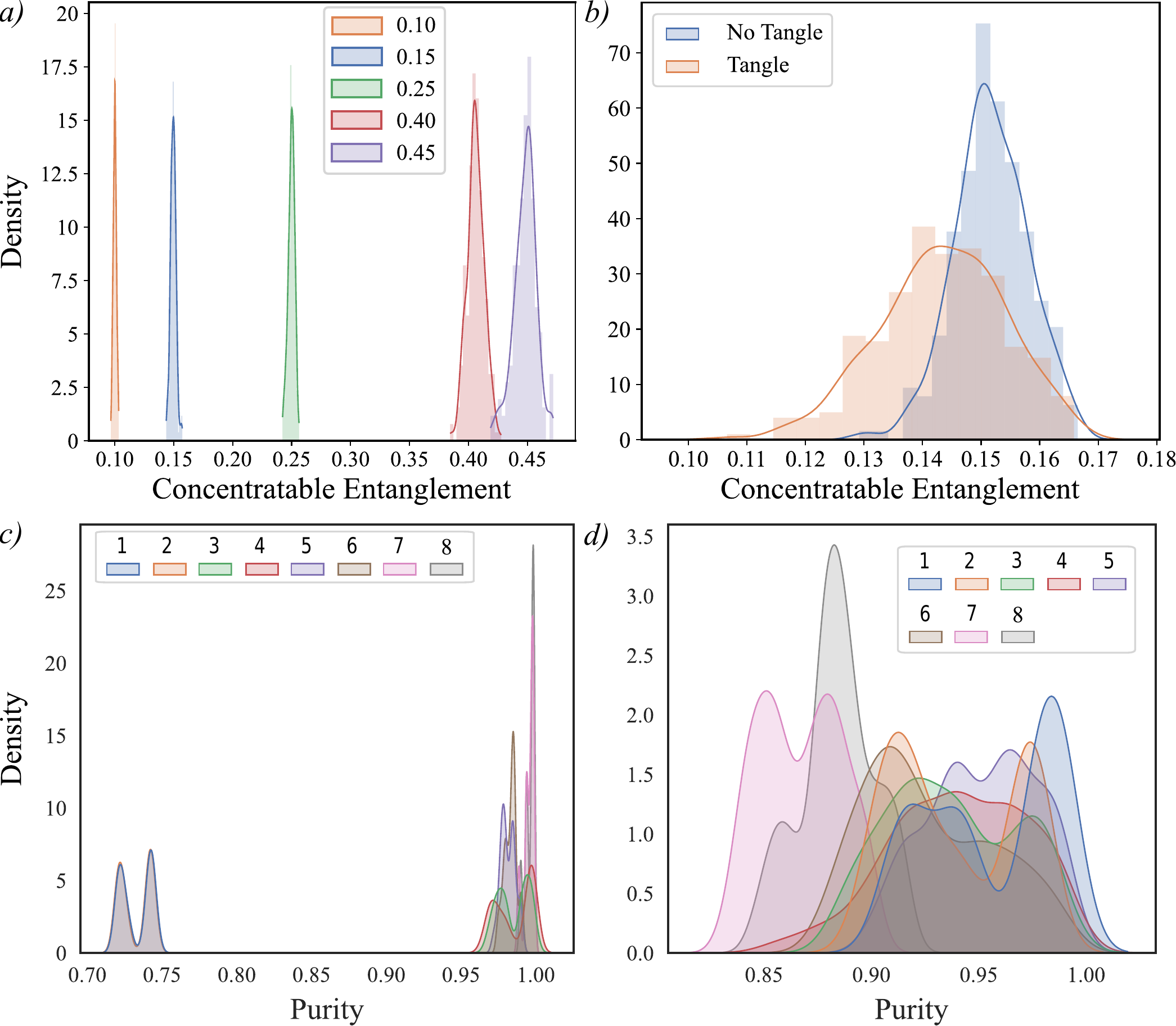}
    \caption{\textbf{Eight qubit NTangled dataset characterization.} a) Entanglement probability density function for the states in the $8$ qubit NTangled dataset with target CE values of $\xi \in \{0.1, 0.15, 0.25, 0.4, 0.45\} $.  b) Entanglement probability density function for the states with target CE value of $\xi = 0.15$ trained with and without an $n$-tangle term in the loss function according to Eq.~\eqref{eq:loss-generation-tangle}. Note that the blue curve represents the same data as that in panel a). c) Purity distribution of the single-qubit states generated in panel (b, blue curve), when training with no $n$-tangle in the loss function.   c) Purity distribution of the single-qubit states generated in panel (b, orange curve), when training with a $n$-tangle in the loss function. In panels b), and c), the curve labeled with index $i$ represents the purity distribution of the $i$-th qubit reduced states.}
    \label{fig:8q_graphs}
\end{figure}

For $8$ qubits we trained $10$ instances of HWE generator  QNNs with $5$ and $6$ layers on $n^2$ input states, and we picked the one that achieved the smallest training error. Here, when generalizing to the full computational basis we are able to obtain  $100\%$ success rate for all CE values of $\xi \in \{0.1, 0.15, 0.25, 0.4, 0.45\} $ with an allowed error of $\delta=0.1$. As we can see in Fig.~\ref{fig:8q_graphs}(a), the distributions of CE are well peaked around the targeted values.

Here it is convenient to take a closer look at how the entanglement is distributed in the states generated by the QNN, specially when we target a small CE value. In this case, it could happen that the entanglement is not properly distributed among all possible parties, but rather is concentrated on a subset of qubits, with the rest of the qubits being in a quasi-product state.

To test how the entanglement is distributed, we compute the  purities of the single-qubit reduced states  for the  dataset generated with a target value of $\xi=0.15$. As shown in Fig.~\ref{fig:8q_graphs}(c), most of the single-qubit reduced states  have a purity that is very close to one, indicating that in general they are almost pure, i.e. all but a small number of parties are unentangled. Thus, despite hitting the correct CE value, states of this type are not necessarily interesting from the perspective of multipartite entanglement.

To bias the entanglement distribution so that the entanglement is more distributed across all parties, we trained the QML model by  adding the  $n$-tangle into the loss function according to Eq.~\eqref{eq:loss-generation-tangle}. In Fig.~\ref{fig:8q_graphs}(b) we show the entanglement probability density function obtained by training with and without the $n$-tangle in the loss function. Here we can see that by adding the $n$-tangle, the distribution widens, but all states are still within the allowed error of $\delta=0.1$. Then, in Fig.~\ref{fig:8q_graphs}(d) we show the purities of the single-qubit reduced states. We can now see that the previous quasi-pure state behaviour no longer appears as all qubits have a wide range of purity, meaning that the entanglement is well distributed across the whole system.

Finally, for a $12$ qubit implementation, we trained a HWE generator QNN with $10$ layers on a training set composed of only $n+1=13$ states. To check the optimizer performance in a large scale experiment, we  ran a single instance of the QNN parameters optimization. Here, we targeted a CE value of $\xi=0.25$, and with an allowed error of $\delta=0.1$, we were able to  accomplish a $75\%$ success rate on the entire $4096$-dimensional computational basis.

\subsection{Entangled State Classification}

In this section we showcase using the states in NTangled datasets to benchmark a QML model for the supervised quantum machine learning task of classifying states according to their entanglement. Specifically, we use the $3$ and $4$ qubit NTangled datasets with target CE value of $\xi \in \{0.05, 0.25\} $, and for generator QNNs trained and tested on random product states. As we can see in Fig.~\ref{fig:3q_hwe_ps_entanglements} for $3$ qubits, the CE densities in those datasets  are very peaked and have low overlap probability (under $3\%$ overlap). This property can also be seen in the Appendix for the CE density distributions of the $4$ qubit dataset.  Then, for the classifying QNN, we employ the Quantum Convolutional Neural Network (QCNN) of Ref.~\cite{cong2019quantum}, as it has been shown that this architecture does not exhibit barren plateaus~\cite{pesah2020absence}. 

As stated in the General Framework (Section~\ref{sec:framework}), the classifying QML model can take as an input $m$-copies of the each state in the dataset. To test the performance of the classifying QML as a function of the number of copies $m$, we first consider the case when the QCNN only takes as input a single copy of each training state. Here, one can rarely achieve over $70\%$ classification accuracy. This result is expected, as accessing entanglement information in a pure quantum state usually requires the evaluation of a polynomial of order two in the matrix elements of the state's density matrix~\cite{walter2016multipartite,horodecki2009quantum,subacsi2019entanglement}. For instance, estimating in a quantum circuit the purity of a pure quantum state (which measures the amount of mixedness, and quantifies the bipartite entanglement), requires two copies of the input state~\cite{cincio2018learning,beckey2021computable}. Thus, it is natural to consider that simultaneously inputting to the QCNN two copies of each state in the dataset should improve its performance.

\begin{figure}[t]
    \centering
    \includegraphics[width=\columnwidth]{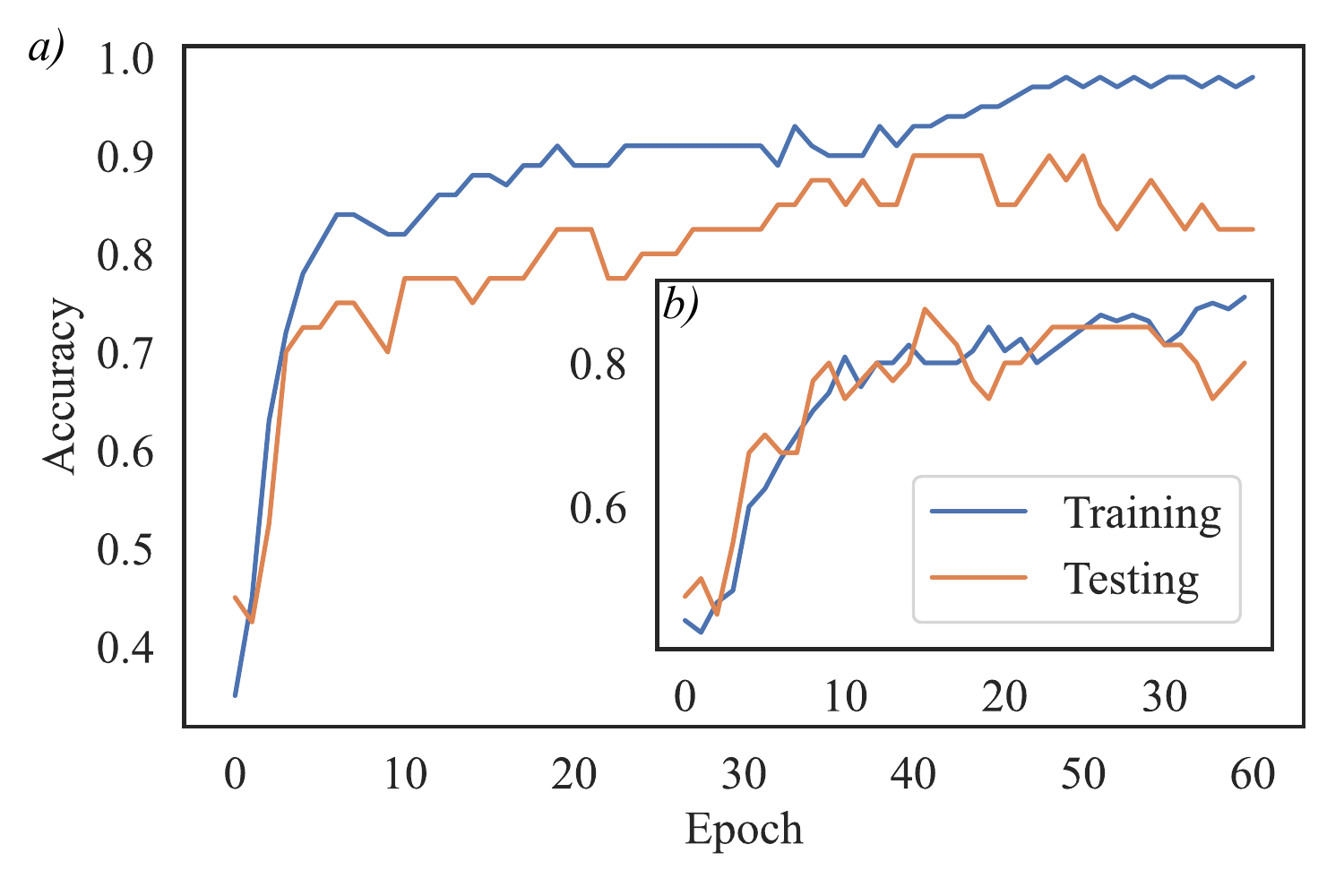}
    \caption{\textbf{Classification of states in the NTangled datasets.} We show the training and testing accuracy versus the number of training epochs for training on (a) $4$, and (b) $3$ qubit NTangled datasets. In this classification task, we train a QCNN to distinguish between states in the NTangled datasets with target CE values of $\xi \in \{0.05, 0.25\} $. The QCNN took as input two copies of each state in the dataset. }
    \label{fig:classification}
\end{figure}

In Fig.~\ref{fig:classification} we present results for training a  QCNN  by using $m=2$, i.e.,  feeding two copies of each input state. The results shown are obtained by independently training $15$ models and picking the one that reached the largest training accuracy. Here we see that for both $4$ qubits (Fig.~\ref{fig:classification}(a)) and $3$ qubits (Fig.~\ref{fig:classification}(b)) we can achieve a testing classification accuracy of over $90\%$. First, let us note the significant improvement that arises by using two copies rather than one. Second, we remark that  achieving $100\%$ accuracy is untenable due to the  $3\%$ mislabeling error.


\subsection{Ansatz Depth Learning}

\subsubsection{Motivation}

We here explore a different way of generating entangled datasets by sampling output states from quantum circuits with different depths~\cite{grant2018hierarchical}. Given a  quantum circuit $V_L$ with a fixed depth  $L$ (say, a HWE quantum circuit), one samples states $\{\ket{\psi_i}\}$ from a given probability distribution, and sends them through the quantum circuit. Thus, the dataset is of the form $\{V_L\ket{\psi_i},y_i=L\}$. This can alternatively be thought of as a process where one  allows a system to evolve with a fixed, but possibly unknown process, and learns the length of the time evolution.

In Fig.~\ref{fig:depth_learning_ansatz} we show the HWE quantum circuit used to generate the dataset. We note that for each state in the dataset, one randomly samples the parameters in the ansatz. The main advantage of this type of architecture is that it can be readily implemented with a quantum circuit~\cite{kandala2017hardware}, and requires no previous training to generate the dataset. 

\subsubsection{Ansatz Depth Classification}

\begin{figure}[t]
    \centering
    \includegraphics[width=.8\columnwidth]{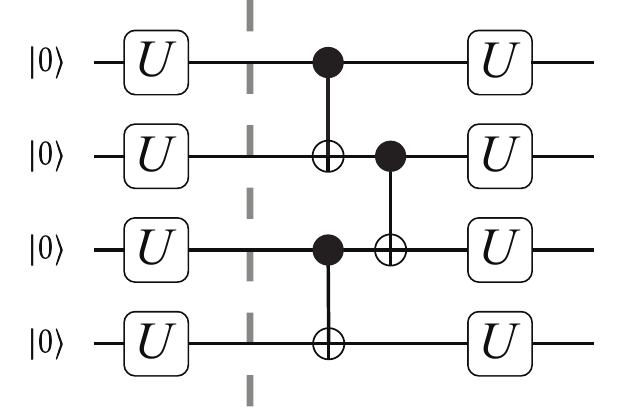}
    \caption{\textbf{Hardware efficient ansatz for depth learning.} The input states are sampled from the distribution of product states. As shown in the figure, we first initialize all qubits to the all zero state and apply a random single qubit unitary. Then, the final state is formed by sending the product state through a HWE ansatz,  where a single layer is composed of a CNOTs acting on alternating pars of qubits followed by local unitaries. Shown is the case of a circuit with one layer. }
    \label{fig:depth_learning_ansatz}
\end{figure}

\begin{table}[t]
\centering
\begin{tabular}{||c c c c||}
    \hline
    \# Qubits & Depths & Accuracy (\%) & Train/Test Set Size\\
    \hline \hline
    4 & 1 v 6 & 84.5 & 420/180\\
    \hline
    4 & 1 v 6 & 92 & 560/240\\
    \hline
    4 & 1 v 3 & 76.5 & 560/240\\
    \hline
    8 & 2 v 6 & 91 & 700/300\\
    \hline
    8 & 2 v 6 & 96.5 & 4000/1000\\
    \hline
    8 & 2 v 8 & 95.5 & 700/300\\
    \hline
    8 & 4 v 8  & 64 & 700/300\\
    \hline
    8 & 4 v 8 & 79 & 1400/600\\
    \hline
    8 & [1,2,3] v [5,6,7] & 93 & 3000/900\\
    \hline
\end{tabular}
\caption{\textbf{Testing classification accuracy of QCNNs trained to determine ansatz depth}.  The training and testing sets were generated using a $70\%/30\%$ testing/validation split. Here, the last row corresponds to the learning task where the states generated by depths $L=1,2,3$ where assigned the first label, while the states generated by $L=5,6,7$ where assigned the second label.}
\label{table:depth}
\end{table}

Here we consider the supervised machine learning task of binary classifying states in the depth-based dataset according to the number of layers they were created with. We note that this task has already been considered in~\cite{grant2018hierarchical}. Therein the authors sent a single copy ($m=1$) of the states into a classifying QNN, and achieved a classification accuracy of around $50\%$, which is no better than random guessing. Similar to the previous implementations on the NTangled dataset, we here also use a QCNN which takes as input $m=2$ copies of each state in the dataset. 

In Table~\ref{table:depth} we show the classification accuracy for classifying states according to their depth. Here we trained  10 trained models and reported the ones that achieved the smallest errors. We first consider a task where one set of states corresponds to a low-depth circuit, while the other set corresponds to a higher-depth circuit. For $4$-qubit problems, we can reach a classification accuracy of up to $92\%$ when one classifies states generated with HWE circuits with depths $L=1$ and $L=6$. Then, for $8$ qubits we can reach a classification accuracy of up to $96.5\%$ accuracy when classifying states generated with HWE circuits with depths $L=2$ and $L=8$.

For $8$ qubits, we also considered a binary classification task where multiple depths are binned together and assigned the same label. For example, we train a QCNN to distinguish between $8$ qubit states generated with depths $[1,2,3]$ versus states generated with depths $[5,6,7]$. In this case we obtain a classification accuracy of $93\%$. 

Here we can see that the performance in classifying states in a depth-based dataset can be greatly improved by sending $m=2$ copies of each state into the QCNN. The underlying reason is that, as previously mentioned, the depth-based dataset is also an entanglement-based dataset: states generated by low/high depth circuits have different entanglement properties. In the next section, we analyze the states in the depth-based dataset  to confirm this claim.

\subsubsection{Entanglement analysis of the depth-based  dataset}

In this section we analyze certain entanglement properties of the depth-based dataset. First, let us note that one can predict certain entanglement properties by analyzing the architecture of the ansatz. For instance, the number of layers is directly related to the minimum distance over which qubits can be correlated. Specifically,  one layer grows the light-cone for each qubit by two registers. Thus, we expect one can find full multipartite entanglement only when $L\geq \lfloor n/2 \rfloor$. Moreover, since the ansatz does not wrap around (i.e. the first qubit is not connected to the last qubit), we can expect that for small depths, the entanglement will concentrate in the middle qubits.

One can test these conjecture by sampling states from the dataset and computing the Concurrence~\cite{wootters2001entanglement} between all $\binom{n}{2}$ pairs of qubits. Here we recall that the Concurrence is a bipartite entanglement measure. For a mixed two-qubit state $\rho$, the Concurrence is defined as $\mathscr{C}(\rho) = \max \{0,\lambda_1 - \lambda_2 - \lambda_3 - \lambda_4\}$ where $\{\lambda_i\}$ are the eigenvalues (from greatest to smallest) of $\tilde{\rho} = (\sigma_y\otimes\sigma_y)\rho^*(\sigma_y\otimes\sigma_y)$.

\begin{figure}[t]
    \centering
    \includegraphics[width=\columnwidth]{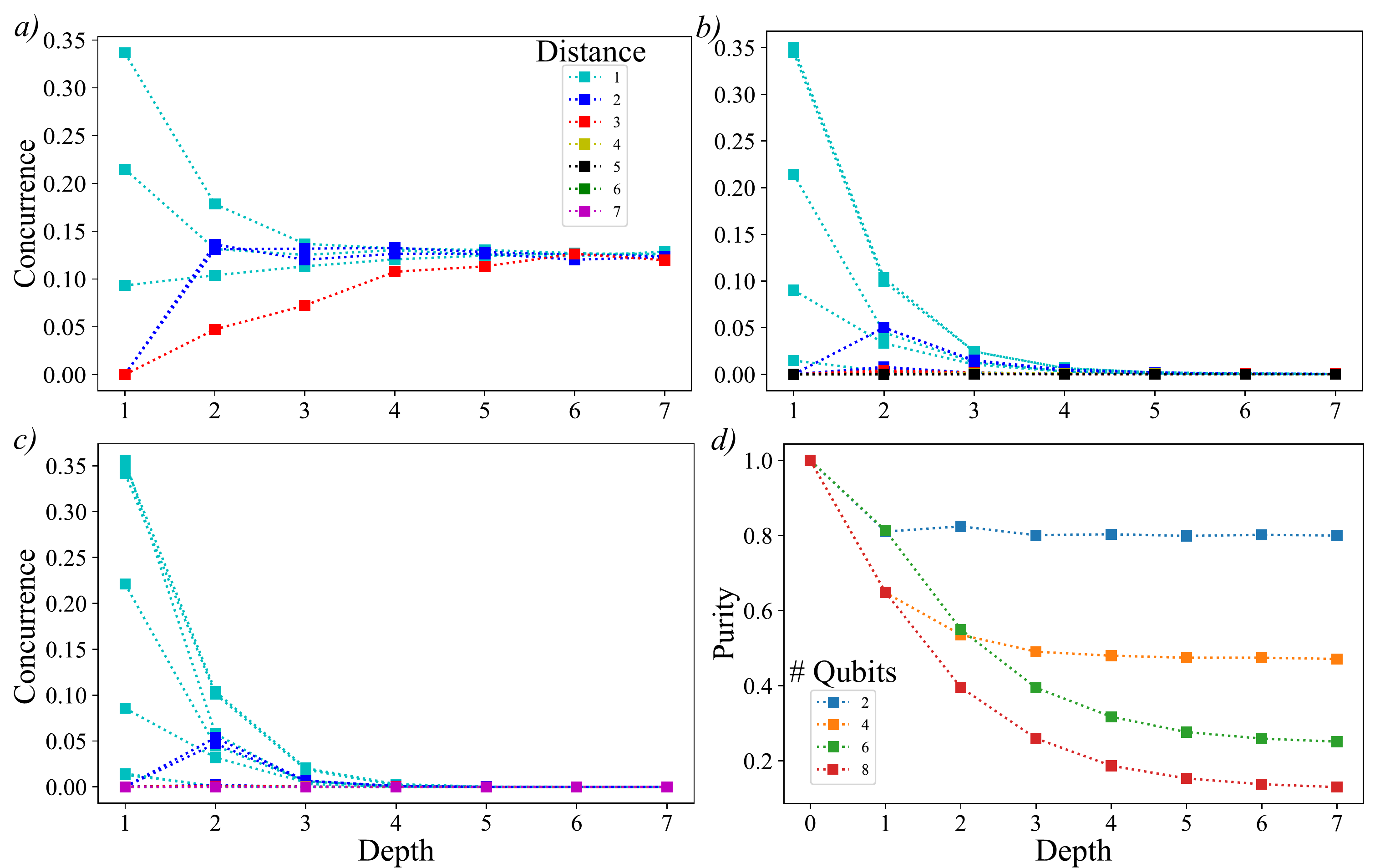}
    \caption{\textbf{Entanglement properties of the states in the depth-based dataset.} (a-c) Concurrence as a function of the circuit depth for system respectively composed of $n=4,6,8$ qubits. Here, lines of the same color correspond to the Concurrence between qubits at the same distance (see text). (d) Averaged purity of the reduced states of the first $n/2$ qubits. In all cases, the averages were obtained by sampling  $1000$ (for 4 qubits), $2000$ (for 6 qubits) and $4000$ (for 8 qubits) output states from fixed depth circuits.}
    \label{fig:depth_entanglement}
\end{figure}

In Fig.~\ref{fig:depth_entanglement} (a-c) we plot the average Concurrence as a function of the circuit depth for datasets in $n=4,6,8$ qubits. Here we compute the Concurrences of all two-qubit reduced states $\rho_{ij}$, where the indexes $i$  and $j$ denote the considered qubits. For an easy comparison, we use the same color for the Concurrence of qubits at the same distance $l$, i.e., we use the came color for the Concurrence of all reduced states $\rho_{ij}$ with $|i-j|=l$.  Here we see that neighbouring qubits are indeed more entangled (have a higher Concurrence) at low ansatz depth. Then, as the circuit depth increases, all qubits become entangled among themselves, and due to the monogamy of entanglement~\cite{cornelio2013multipartite}, the Concurrences converge to a fixed value independent of the distance $l$. 

While the Concurrence is a measure of how much every pair of qubits are entangled amongst themselves, one can also consider a more coarse grained entanglement measure. Namely, we here also compute the purity $\Tr[\rho_{[n/2]}^2]$, where $\rho_{[n/2]}^2$ denotes the reduced state of the first $n/2$ qubits. This quantity  measures the bipartite entanglement between the first, and the second half of the qubits. We show results in Fig.~\ref{fig:depth_entanglement}(d) for datasets of different number of qubits $n$. Here we can see that as the circuit depth depth increases, the purities achieve the saturation value of $\frac{2^{n+1}}{1+2^{2n}}$, which corresponds to the Haar-averaged reduced state purity. As expected this is due to the fact that deep HWE ansatzes form unitary $2$-designs~\cite{brandao2016local,harrow2018approximate}.

\section{Discussion}

In this paper we introduced the NTangled quantum dataset for Quantum Machine Learning (QML). The states in the NTangled dataset have different amounts and types of multipartite entanglement. As we show here, the NTangled datasets can be used to benchmark QML models for supervised learning tasks such as binary classification. 

The main advantage of using the NTangled dataset to benchmark QML architectures is that the data is already in quantum form and requires no embedding scheme. This is in contrast to using classical data (such as the MNIST or Iris datasets), where one first needs to encode, or embed, the classical data in the quantum Hilbert space. As we argued here, using classical data has two main disadvantages. First, it is still unclear what makes a good embedding scheme, and it has been shown that the choice of embedding can actually negatively impact the model's trainability and lead to issues such as barren plateaus. Second, the utility of QML with classical problems is questionable, as it is still unclear if one can achieve a quantum advantage using classical data. These two issues can be avoided by using datasets composed of true quantum data, such as our NTangled dataset.

Our first contribution was to train a generator Quantum Neural Network (QNN) to create the states in the NTangled dataset. Specifically, we created datasets on $n=3,4,8$ and $12$ qubits of states with different amounts of multipartite entanglement. For this purpose we benchmarked different ansatzes for the generator QNN. Then, we reported how sampling from different sets of input states can affect the success probability that the states generated from the QNN have a desired value of entanglement. In most cases we achieved a success probability of over $95\%$, with some architectures even being able to achieve success with  $100\%$ probability. In the Appendix and in the Github repository of~\cite{schatzki2021github} we give a detailed description of how to employ our trained generator QNNs to use the NTangled datasets. The datasets can be efficiently stored, as they come in the form of a parametrized quantum circuit plus the values of the trained parameters.

We then showcased using the NTangled dataset for the supervised learning task of binary classification. Here, we trained a classifying convolutional QNN to classify states according to their entanglement level. We showed how the number of copies of each state that are input to the classifying QNN can affect its performance. For inputting two copies of each state, we achieved a classification with testing accuracy of over $90\%$.

In addition, we also studied an alternative entanglement-based dataset, where states are generated by sampling the output of quantum circuits with different depths. Here, we showed that we can train a QNN to classify states on $n=4,8$ qubits according to the depth for which they were created (shallow vs deep circuits) with an accuracy of up to $95\%$. The main advantage of this dataset is that it is easy to implement, and requires no training for the generator QNN.

Our work paves the way towards large-scale non-trivial quantum datasets. While some other quantum datasets have been considered, these can have some crucial issues. For instance, some proposals use datasets formed by ground-states of the one-dimensional Ising model~\cite{uvarov2020machine,banchi2021generalization}. However, these can be trivially classified  by performing a single-qubit measurement and do not really require a QML model. In addition, while some small scale publicly available datasets have started to appear, there is still need for large-scale datasets that can be used to benchmark the scaling of QML models and guarantee that they will not have trainability issues such as barren plateaus~\cite{mcclean2018barren,cerezo2020cost}.

We finally note that our work also has implications for using QNNs to study multipartite states and for quantum entanglement theory. Usually, generating states with the same entanglement properties is achieved by  using local operations and classical communication (LOCC). That is, one takes a state with a desired amount of entanglement and applies to it operations in LOCC. In this sense, our approach for generating the NTangled dataset is fundamentally different. Here, we took as input a set of states that have the same amount of entanglement, and we trained a global unitary (which is not in LOCC), so that its output states have a desired amount of entanglement. This procedure then allowed us to generate entangled states that do not belong to the same LOCC orbits. Further, our results suggest that this method is scalable and can be tuned to generate new distributions with desired entanglement properties.

Thus, we expect that our NTangled dataset will be relevant for both the quantum computing and the quantum information communities.

\section{Acknowledgments}
We thank Elliott Ball and Lukasz Cincio for useful discussions, Linjian Ma for providing code for complex CP decomposition, and Jacob Beckey and Guangkuo Liu for discussions and improvements on Proposition~\ref{theo:1}.  L.S. was supported by the U.S. DOE through a quantum computing program sponsored by the Los Alamos National Laboratory (LANL) Information Science \& Technology Institute. L.S. also acknowledges support from the NSF Quantum Leap Challenge Institute for Hybrid Quantum Architectures and Networks (NSF Award 2016136). AA, PJC, and MC were initially supported by Laboratory Directed Research and Development (LDRD) program of LANL under project number 20190065DR. PJC also acknowledges support from the LANL ASC Beyond Moore's Law project. MC also acknowledges support from the Center for Nonlinear Studies at LANL.  This work was supported by the U.S. DOE, Office of Science, Office of Advanced Scientific Computing Research, under the Accelerated Research in Quantum Computing (ARQC) program.

\bibliography{references,quantum}

\onecolumngrid

\pagebreak

\setcounter{section}{0}
\setcounter{proposition}{0}
\setcounter{theorem}{0}
\setcounter{corollary}{0}

\section*{Appendix}

Here we present additional details for the main results in our manuscript. In Appendix~\ref{sec:appendix_I} we further discuss the theory of multipartite entanglement, and we present how the  Concentratable Entanglement (CE) can be computed. Here we also present the proofs for our Theorems. Then, in Appendix~\ref{sec:appendix_II} we describe the ansatzes  employed throughout this work to generate the NTangled datasets. Finally, in Appendix~\ref{sec:appendix_III} we present additional numerical results to characterize the states in the $3$ and $4$ qubit NTangled datasets. Namely, the distributions of CE for generators of various depths.

\section{Multipartite Entanglement and Concentratable Entanglement Properties}\label{sec:appendix_I}

\subsection{Multipartite Entanglement}
There exist (among others) two ways of considering multipartite entanglement: local operations and classical communication (LOCC) and stochastic local operations and classic communication (SLOCC). SLOCC often holds more operational significance due to probabilistic protocols being more experimentally accessible then purely deterministic. Two states are considered to be SLOCC equivalent if $\ket{\psi} = A_1\otimes A_2\otimes\ldots\otimes A_n \ket{\phi}$, where $\{A_i\}$ are invertible operators. For $3$ qubits, there exist six non-equivalent SLOCC classes, two of which represent truly multipartite entangled states \cite{dur2000three}. For any larger systems, there are an infinite number. In $4$ qubits, for example, SLOCC classes can be described via nine canonical forms, each crucially with free parameters \cite{li2009slocc}. Thus, generating a variety of entangled states presents a difficult task in general.

The $n$-tangle is one way to quantify entanglement in multipartite states \cite{wong2001potential}. This is defined for even numbers of qubits and takes, for pure states, the form: 
\begin{align}
    \tau_n(\ket{\psi}) = |\braket{\psi}{\tilde{\psi}}|^2,\ 
    \ket{\tilde{\psi}} := \sigma_y^{\otimes n}\ket{\psi^*}.
\end{align}
However, biseparable states, such as $\ket{\Psi^-}^{\otimes 2}$, obtain the maximal value of 1.

True multipartite entanglement can be analyzed in various methods, such as invariant polynomials~\cite{szalay2015multipartite}. Here, for further reference, we will briefly discuss the Schmidt measure~\cite{eisert2001schmidt}. Any state can be written in the form $\ket{\psi} = \sum_{i=1}^ra_i\ket{\psi_1^{(i)}}\otimes\ket{\psi_2^{(2)}}\otimes\ldots\otimes\ket{\psi_n^{(i)}}$. Taking r to be the minimal number of terms in such a decomposition for a given state, the Schmidt measure is $\log_2r$. Note that this is equivalent to tensor CP decomposition~\cite{kolda2009tensor}. For example, a GHZ state on any system size is rank 2 and thus has Schmidt measure 1. This measure has operational relevance in that a state cannot be converted via SLOCC protocols into one of higher Schmidt measure~\cite{chitambar2008tripartite}. 

In this work we considered the Concentratable Entanglement (CE) measure defined in~\cite{beckey2021computable}. The Concentratable Entanglement is given by
\begin{equation}
    \mathcal{C}(\ket{\psi})= 1-\frac{1}{2^n}\sum_{\alpha\in Q}\Tr[\rho_\alpha^2],
\end{equation}
where $Q$ is the power set of $[n]$ and $\rho_\alpha$ is the reduced state of subsystems in index subset $\alpha$. Experimentally, CE is accessible through sending two copies of a state into the controlled Swap Test in Fig.~\ref{fig:swap_test}. Simply measure the probability of obtaining all zero's on the ancilla registers: $\mathcal{C}(\ket{\psi}) = 1-p(\vec{0})$. Further, the $n$-tangle can be easily accessed through the probability of the all one measurement as $p(\vec{1}) = \frac{\tau_n(\ket{\psi})}{2^n}$. Lastly, note that GHZ states follow $\mathcal{C}(\ket{\text{GHZ}_n}) = \frac{1}{2}-\frac{1}{2^n}$.

\subsection{Proof of Theorem 1}

Let us recall Theorem~\ref{theo:1SM}: 
\begin{theorem}\label{theo:1SM}
    Given two $n$-qubit pure states $\psi=\dya{\psi}, \phi=\dya{\phi}$, the difference between their CE follows
    \begin{align}
        |C(\ket{\psi})-C(\ket{\phi})| &\leq \sqrt{2}(2-2^{2-n})D_{tr}(\psi,\phi)\\
        & \leq 2\sqrt{2}D_{tr}(\psi,\phi)\,,
        \label{eq:CE_TheoremSM}
    \end{align}
where, for matrices $A$ and $B$, $D_{tr}(A,B)=\frac{1}{2}\Tr[\sqrt{(A-B)\ad (A-B}]$ denotes the trace distance.
\end{theorem}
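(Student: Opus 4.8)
The plan is to turn the (nonlinear) purities appearing in the CE into expectation values of a linear observable on two copies of the state, so that the continuity estimate reduces to an elementary bound on how much the expectation of a fixed unitary can differ between two pure states. For each subset $\alpha$ let $\mathrm{SWAP}_\alpha$ denote the operator on $\HC\ot\HC$ that exchanges the subsystems labelled by $\alpha$ between the two copies, and set $\ket{\Psi}=\ket{\psi}\ot\ket{\psi}$, $\ket{\Phi}=\ket{\phi}\ot\ket{\phi}$. As in the controlled-SWAP-test evaluation of the CE, the swap-test identity gives $\Tr[\rho_\alpha^2]=\bra{\Psi}\mathrm{SWAP}_\alpha\ket{\Psi}$ and $\Tr[\sigma_\alpha^2]=\bra{\Phi}\mathrm{SWAP}_\alpha\ket{\Phi}$, where $\rho_\alpha,\sigma_\alpha$ are the reductions of $\psi,\phi$ onto $\alpha$. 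Subtracting the two expressions for $C$ then rewrites $C(\ket{\psi})-C(\ket{\phi})$ as $\frac{1}{2^n}\sum_{\alpha}\big(\bra{\Phi}\mathrm{SWAP}_\alpha\ket{\Phi}-\bra{\Psi}\mathrm{SWAP}_\alpha\ket{\Psi}\big)$.

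I would then carry out three steps. \emph{(i)} Observe that two of the $2^n$ terms vanish identically: for $\alpha=\emptyset$ one has $\rho_\emptyset=\sigma_\emptyset=1$, and for $\alpha=[n]$ purity of a pure state gives $\Tr[\rho_{[n]}^2]=\Tr[\sigma_{[n]}^2]=1$. Hence only the $2^n-2$ proper nonempty subsets survive, and the triangle inequality yields
\begin{equation}
|C(\ket{\psi})-C(\ket{\phi})|\le\frac{1}{2^n}\sum_{\alpha\ne\emptyset,[n]}\big|\bra{\Phi}\mathrm{SWAP}_\alpha\ket{\Phi}-\bra{\Psi}\mathrm{SWAP}_\alpha\ket{\Psi}\big|.
\end{equation}
\emph{(ii)} Bound each summand with the observable/trace-distance inequality $|\Tr[O(\Phi-\Psi)]|\le\norm{O}_\infty\norm{\Phi-\Psi}_1$; since every $\mathrm{SWAP}_\alpha$ is unitary, $\norm{\mathrm{SWAP}_\alpha}_\infty=1$, so each summand is at most $\norm{\Phi-\Psi}_1=2D_{tr}(\Phi,\Psi)$. \emph{(iii)} Convert the doubled-copy distance back to the single-copy one: writing $F=|\ip{\psi}{\phi}|^2$, the overlap of the doubled states is $|\ip{\Psi}{\Phi}|^2=F^2$, so $D_{tr}(\Phi,\Psi)=\sqrt{1-F^2}$, whereas $D_{tr}(\psi,\phi)=\sqrt{1-F}$. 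Factoring $\sqrt{1-F^2}=\sqrt{1+F}\,\sqrt{1-F}=\sqrt{1+F}\,D_{tr}(\psi,\phi)$ bounds each summand by $2\sqrt{1+F}\,D_{tr}(\psi,\phi)$.

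Assembling the pieces, the $2^n-2$ surviving terms and the prefactor $\frac{1}{2^n}$ combine with the per-term factor $2$ to give $\frac{2(2^n-2)}{2^n}=2-2^{2-n}$, so that $|C(\ket{\psi})-C(\ket{\phi})|\le(2-2^{2-n})\sqrt{1+F}\,D_{tr}(\psi,\phi)$; using $\sqrt{1+F}\le\sqrt{2}$ gives the first claimed inequality, and $2-2^{2-n}\le2$ gives the second. The main obstacle is really the opening move: recognizing that $\Tr[\rho_\alpha^2]$ is a swap expectation value on the doubled state, which linearizes a quadratic functional of $\psi$ and makes the whole estimate elementary. A secondary point deserving care is the bookkeeping of the two vanishing contributions, since it is exactly the removal of the $\alpha=\emptyset$ and $\alpha=[n]$ terms (rather than treating $\frac{1}{2^n}\sum_\alpha\mathrm{SWAP}_\alpha$ as one projector) that produces the dimension-dependent coefficient $\sqrt{2}(2-2^{2-n})$, while the $\sqrt{2}$ itself originates solely from passing from the two-copy distance to the single-copy distance via $\sqrt{1+F}\le\sqrt{2}$.
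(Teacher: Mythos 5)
Your proof is correct, and it takes a genuinely different route from the paper's. The paper works subset by subset at the level of reduced density matrices: it bounds $|\Tr[\psi_s^2-\phi_s^2]|=|\Tr[(\psi_s+\phi_s)(\psi_s-\phi_s)]|$ by Cauchy--Schwarz for the Hilbert--Schmidt inner product, the triangle inequality together with $\norm{\rho}_2\le 1$, the inequality $\norm{X}_2\le\tfrac{1}{\sqrt{2}}\norm{X}_1$ for traceless Hermitian $X$ (this is where its $\sqrt{2}$ originates), and finally trace-norm monotonicity under partial trace, obtaining $|\Tr[\psi_s^2-\phi_s^2]|\le\sqrt{2}\,\norm{\psi-\phi}_1$ uniformly over the $2^n-2$ proper nonempty subsets. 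You instead linearize each purity as a swap expectation value on two copies, apply H\"older with $\norm{\mathrm{SWAP}_\alpha}_\infty=1$, and convert the two-copy distance back to the single-copy one via the pure-state relation $D_{tr}(\Phi,\Psi)=\sqrt{1-F^2}=\sqrt{1+F}\,D_{tr}(\psi,\phi)$, so your $\sqrt{2}$ comes from $\sqrt{1+F}\le\sqrt{2}$ rather than from any norm-conversion inequality. Both arguments remove the same two trivial subsets ($\alpha=\emptyset$ and $\alpha=[n]$), which is what produces the common factor $2-2^{2-n}$. What each buys: the paper's per-subset bound uses only properties valid for arbitrary mixed global states (so it adapts more readily if one relaxes purity), while yours is more self-contained — no traceless-matrix norm inequality is needed — mirrors the operational cSWAP-test definition of the CE, and in fact yields the slightly sharper intermediate bound $(2-2^{2-n})\sqrt{1+F}\,D_{tr}(\psi,\phi)$, at the cost of relying on global purity in an essential way in the fidelity-to-distance conversion.
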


\begin{proof}
    Here by $\psi_s$ we denote the reduced density matrix for index subset $s\in [n]$.
    \begin{align*}
        |\Tr[{\psi_s^2-\phi_s^2}]| & = |\Tr{[(\psi_s+\phi_s)(\psi_s-\phi_s)]}|\\
                                   & \leq \norm{\psi_s+\phi_s}_2\norm{\psi_s-\phi_s}_2\\
                                   & \leq (\norm{\psi_s}_2+\norm{\phi_s}_2)\norm{\psi_s-\phi_s}_2\\
                                   & \leq 2\norm{\psi_s-\phi_s}_2\\
                                   & \leq {\sqrt{2}}\norm{\psi_s-\phi_s}_1\\
                                   & \leq {\sqrt{2}}\norm{\psi-\phi}_1.
    \end{align*}
    Where the first inequality comes from Cauchy-Schwarz, the second/third from triangle inequality (and the maximum H.S. norm for density matrices), {the fourth from upper bounds on 2-norm given in} \cite{Coles2012_decoherence,coles2019strong}, and the last from trace norm monotonicity under CPTP maps. Recall the definition of Concentratable Entanglement:
    
    \[
         C_{\ket{\psi}}([n]) = 1-\frac{2}{2^n}-\frac{1}{2^n}\sum_{k=1}^{n-1}\sum_{|s|=k} \Tr[\psi_s^2]\,.
    \]
    Applying the bound above on the $2^n-2$ purities in the summation and noting that $\norm{A-B}_1=2D_{tr}(A,B)$, we arrive at
    
    \begin{align}
        |C_{\ket{\psi}}([n])-C_{\ket{\phi}}([n])|& \leq {\sqrt{2}}*\frac{2^n-2}{2^n} \norm{\psi-\phi}_1\nonumber\\
        &= {\sqrt{2}(2-2^{2-n})}D_{tr}(\psi,\phi)\,.
    \end{align}
\end{proof}

\begin{figure}
    \centering
    \includegraphics[scale=0.6]{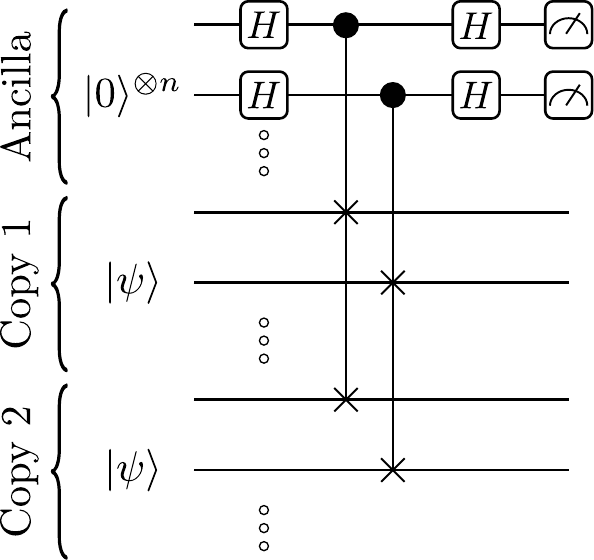}
    \caption{\textbf{Swap test circuit for measuring the Concentratable Entanglement of a state.} $\mathcal{C}(\ket{\psi}) = 1-p(\vec{0})$, where $p(\vec{0})$ is the probability of measuring all  the ancilla registers in the zero state.}
    \label{fig:swap_test}
\end{figure}

\subsection{Proof of Theorem 2}

\begin{theorem}\label{thm:3q_ent_appendix}
    Any 3 qubit state $\ket{\psi}$ with $\mathcal{C}(\ket{\psi}) > 0.25$ has true multipartite entanglement.
\end{theorem}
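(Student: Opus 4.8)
The plan is to prove the contrapositive: a pure $3$-qubit state that fails to be genuinely tripartite entangled (so that at least one qubit factorizes off as a product) must satisfy $\mathcal{C}(\ket{\psi}) \leq 0.25$.

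First I would simplify the CE formula for $n=3$. The power set $Q$ of $\{1,2,3\}$ has eight elements. The terms $\alpha=\emptyset$ and $\alpha=\{1,2,3\}$ each contribute $\Tr[\rho_\alpha^2]=1$ (the former by the convention $\rho_\emptyset:=1$, the latter because $\ket{\psi}$ is pure). For the remaining six subsets I would use the purity--complement identity $\Tr[\rho_S^2]=\Tr[\rho_{\overline{S}}^2]$, valid whenever the global state is pure, which pairs each single-qubit marginal with its two-qubit complement. Writing $P_i:=\Tr[\rho_i^2]$ for the purity of the $i$-th single-qubit reduced state, this collapses the sum to
\begin{equation}
\mathcal{C}(\ket{\psi}) = \frac{1}{4}\bigl(3 - P_1 - P_2 - P_3\bigr)\,.
\end{equation}

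Next I would invoke the characterization of true multipartite entanglement used here, namely that no qubit is in a product state with the rest. For a pure state, qubit $i$ factorizes exactly when its reduced state $\rho_i$ is pure, i.e. $P_i=1$; conversely, the condition that all $P_i<1$ is (for three qubits) equivalent to genuine tripartite entanglement. So I would assume the state is \emph{not} genuinely multipartite entangled, in which case some qubit, say qubit $1$, factorizes and $P_1=1$, giving
\begin{equation}
\mathcal{C}(\ket{\psi}) = \frac{1}{4}\bigl(2 - P_2 - P_3\bigr)\,.
\end{equation}

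The final step uses the elementary floor on single-qubit purity, $P_j\geq \tfrac12$, which holds for any qubit reduced state. Hence $P_2+P_3\geq 1$ and $\mathcal{C}(\ket{\psi})\leq \tfrac14(2-1)=0.25$, with equality precisely when the non-factorized pair is maximally entangled. Taking the contrapositive yields the theorem. I do not expect a genuine obstacle here: the only points requiring care are confirming that ``no qubit factorizes'' is the correct reading of true multipartite entanglement for three qubits, and verifying the single-qubit purity floor, both of which are standard.
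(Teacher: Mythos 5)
Your proof is correct and follows essentially the same route as the paper's: both reduce the $n=3$ CE to single-qubit purities via the purity--complement identity, set the factorized qubit's purity to $1$, and finish with the single-qubit purity floor $\Tr[\rho_j^2]\geq\tfrac12$. The paper merely packages the final step differently, observing that the biseparable state's CE equals the CE of its two-qubit factor $\ket{\phi}$ and that two-qubit CE is at most $0.25$, which is the same computation you perform directly.
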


\begin{proof}
It is obvious that a product state will have CE of  0. Thus, we consider a state that is entangled but not multipartite entangled, i.e. without loss of generality taking the form $\ket{\psi} = \ket{AB}\ket{C} = \ket{\phi}\ket{C}$, where $\ket{\phi}=\ket{AB}$. Using the fact that $\Tr[\rho_s^2] = \Tr[\rho_{\bar{s}}^2]$, where $\bar{s}$ is the complement of index subset $s$, we write the CE of this state as:

$$
    \mathcal{C}(\ket{\psi}) = 1-\frac{1}{4}(1+\Tr[\psi_a^2] + \Tr[\psi_b^2] +\Tr[\psi_c^2]).
$$
Of course, $\Tr[\psi_c^2]$ = $\Tr[(\dya{C})^2] = 1$. Also note that $\psi_a = \Tr_b[\dya{AB}] = \phi_a$. A similar property holds for $\psi_b$. Combining,

\begin{align*}
    \mathcal{C}(\ket{\psi}) & = 1-\frac{1}{4}(2+\Tr[\psi_a^2] + \Tr[\psi_b^2])\\
    & = 1-\frac{1}{4}(2+\Tr[\phi_a^2] + \Tr[\phi_b^2]) \\ 
    & = \mathcal{C}(\ket{\phi}).
\end{align*}

It is easy to see that the maximal value of CE for 2 qubits is $0.25$ by noting that $\mathcal{C}(\ket{\phi}) = \frac{1}{2}-\frac{1}{2}\Tr[\phi_a^2]$ and $\Tr[\phi_a^2] \in [0.5, 1]$.
    
\end{proof}

\section{Ansatz and Classifier Implementation} \label{sec:appendix_II}
Here we describe our ansatzes in greater detail to facilitate replication of results and usage of the datasets. We denote an arbitrary single qubit unitary by $U3(i;\alpha,\beta,\gamma) = e^{i(\beta+\gamma)/2}R_Z(\beta)R_Y(\alpha)R_Z(\gamma)$ where the $i$-index indicates the unitary acts on qubit i. $CNOT(i,j)$ indicates a controlled not with register $i$ as the control and $j$ as the target. Similarly, $CZ(i,j)$ indicates a controlled Z upon registers i and j. Lastly, we denote an arbitrary two qubit unitary by $2QU(i,j;\vec{\theta})$, where $\vec{\theta} \in [0,2\pi)^{15}$ and i and j indicate the two qubits being acted upon. In Algorithm~\ref{alg:two_qubit_unitary} we describe the circuit for enacting $2QU$~\cite{vatan2004optimal} (also depicted in Fig.~\ref{fig:two_qubit_unitary}):

In Algorithms~\ref{alg:hwe_generator},~\ref{alg:sea_generator}, and ~\ref{alg:conv_generator} we respectively describe the pseudocode to generate the structure of the HWE, SEA and CONV ansatzes for the generator QNN in the NTangled dataset. Finally, in Algorithm~\ref{alg:depth_learning_ansatz} we present the psuedocode for the ansatz employed to generate the depth-based dataset.

\begin{algorithm}\label{alg:two_qubit_unitary}
\DontPrintSemicolon
\KwIn{Two quantum registers i and j; $\vec{\theta} \in [0,2\pi)^{15}$ }
    $U3(i; \vec{\theta}_{0:3})$;
    
    $U3(j; \vec{\theta}_{3:6})$;
    
    $CNOT(j,i)4$;
    
    $R_Z(i; \theta_7)$;
    
    $R_Y(j; \theta_8)$;
    
    $CNOT(i,j)$;
    
    $R_Y(j; \theta_9)$;
    
    $CNOT(j,i)$;
    
    $U3(i; \vec{\theta}_{10:12})$;
    
    $U3(j; \vec{\theta}_{12:15})$;
\caption{\sc Arbitrary Two Qubit Unitary (2QU)}
\end{algorithm}

\begin{figure}
    \centering
    \includegraphics[width=0.5\columnwidth]{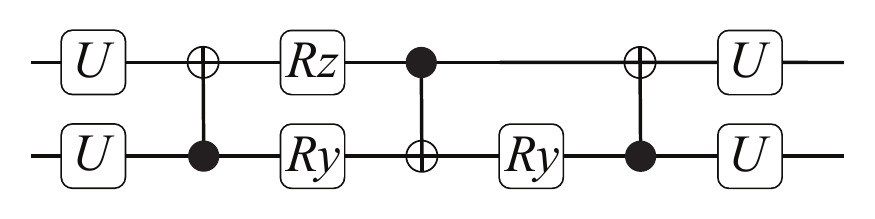}
    \caption{\textbf{Decomposition for an arbitrary two qubit unitary.}}
    \label{fig:two_qubit_unitary}
\end{figure}

\begin{algorithm}[]\label{alg:hwe_generator}
\DontPrintSemicolon
\KwIn{Quantum product state $\ket{\psi^{(\tin)}}=\bigotimes_i \ket{\psi_i^{(\tin)}}$; generator weights $\vec{\theta}: \theta_{a,b,c}\in [0,2\pi)$ in a tensor of shape (L,n,3), where L is the depth and n is the number of qubits.}
\KwOut{An entangled quantum state $\ket{\psi^{(\tout)}} = U_{hwe}(\vec{\theta})\ket{\psi^{(\tin)}}$.}
\For{d=0; d<L; ++d}{
    \For{i=0; i < n; ++i}{
        $U3(i;\vec{\theta}_{d,i,0:3})$
    }

    \For {i=0; $i < \lfloor n/2 \rfloor$ ; ++i}{
        $CZ(2i, 2i+1)$;
    }
    
    \For{i=0; i < n; ++i}{
        $U3(i;\vec{\theta}_{d,i,0:3})$
    }
    
    \For {i=0; $i < \lfloor *n-1)/2 \rfloor$ ; ++i}{
        $CZ(2i+1, 2i+2)$;
    }
}
    \caption{\sc Hardware Efficient Ansatz State Generator}
\end{algorithm}

\begin{algorithm}[h]\label{alg:sea_generator}
\DontPrintSemicolon
\KwIn{Quantum product state $\ket{\psi^{(\tin)}}=\bigotimes_i \ket{\psi_i^{(\tin)}}$; generator weights $\vec{\theta} : \theta_{a,b,c}\in [0,2\pi)$ in a tensor of shape (L,n,3), where L is the depth and n is the number of qubits.}
\KwOut{An entangled quantum state $\ket{\psi^{(\tout)}} = U_{sea}(\vec{\theta})\ket{\psi^{(\tin)}}$.}
\For{d=0; d<L; ++d}{
    \For{i=0; i < n; ++i}{
        $U3(i;\vec{\theta}_{d,i,0:3})$
    }

    \For {i=0; $i \leq n$; ++i}{
        $CNOT(2i, (2i+1)\% n)$;
    }
}
    \caption{\sc Strongly Entangling Ansatz State Generator}
\end{algorithm}

\begin{algorithm}[h]\label{alg:conv_generator}
\DontPrintSemicolon
\KwIn{Quantum product state $\ket{\psi^{(\tin)}}=\bigotimes_i \ket{\psi_i^{(\tin)}}$; generator weights $\vec{\theta} : \theta_{a,b,c}\in [0,2\pi)$ in a tensor of shape (L,30*$\lfloor n/2 \rfloor$), where L is the depth and n is the number of qubits.}
\KwOut{An entangled quantum state $\ket{\psi^{(\tout)}} = U_{conv}(\vec{\theta})\ket{\psi^{(\tin)}}$.}

\For{d=0; d < L; ++d}{
    j = 0;
    
    \For{i=0; $i < \lfloor n/2 \rfloor$; ++q}{
        $2QU(2i,2i+1;\vec{\theta}_{d,j:j+15})$;\newline
        
        j += 15;
    }
    
    \For{i=0; $i < \lfloor n/2 \rfloor$; ++q}{
        $2QU(2i+1,(2i+2)\% n ;\vec{\theta}{d,j:j+15})$;\newline
        
        j += 15;
    }
    
}
    \caption{\sc Convolutional Ansatz for State Generation}
\end{algorithm}

\begin{algorithm}[h]\label{alg:depth_learning_ansatz}
\DontPrintSemicolon
\KwIn{Quantum zero state $\ket{\vec{0}}$; parameters $\vec{\theta} : \theta_{a,b,c}\in [0,2\pi)$ in array of shape (L+1,n,3), where L is the depth and n is the number of qubits.}
\KwOut{A quantum state $\ket{\psi^{(\tout)}} = V(\theta) \ket{\psi^{(\tin)}}$}
\For{i=0; i < n; ++i}{
    $U3(i;\vec{\theta_{0,i,0:3}})$
}

\For{d=1; d < L+1; ++d}{
    \For {i=0; $i < \lfloor n/2 \rfloor$ ; ++i}{
        $CNOT(2i, 2i+1)$;
    }
    
    \For {i=0; $i < \lfloor (n-1)/2 \rfloor$ ; ++i}{
        $CNOT(2i+1, 2i+2)$;
    }
    
    \For {i=0; i<n ; ++i}{
        $U3(i;\vec{\theta}_{d,i,0:3})$
    }
}
    \caption{\sc Depth Learning Hardware Efficient Ansatz}
\end{algorithm}
\clearpage

\section{Additional numerical results} \label{sec:appendix_III}
In Fig.~\ref{fig:entanglement_distributions} (following page) we present the distribution of entanglement for the $3$ and $4$ qubit NTangled dataset. Note that the testing set here for all models was 500 randomly sampled product states. This figure demonstrates that for $4$ qubits, the SEA ansatz had trouble generating anything but highly entangled states.

\begin{sidewaysfigure*}[h]
    \centering
     \includegraphics[height=0.65\textheight]{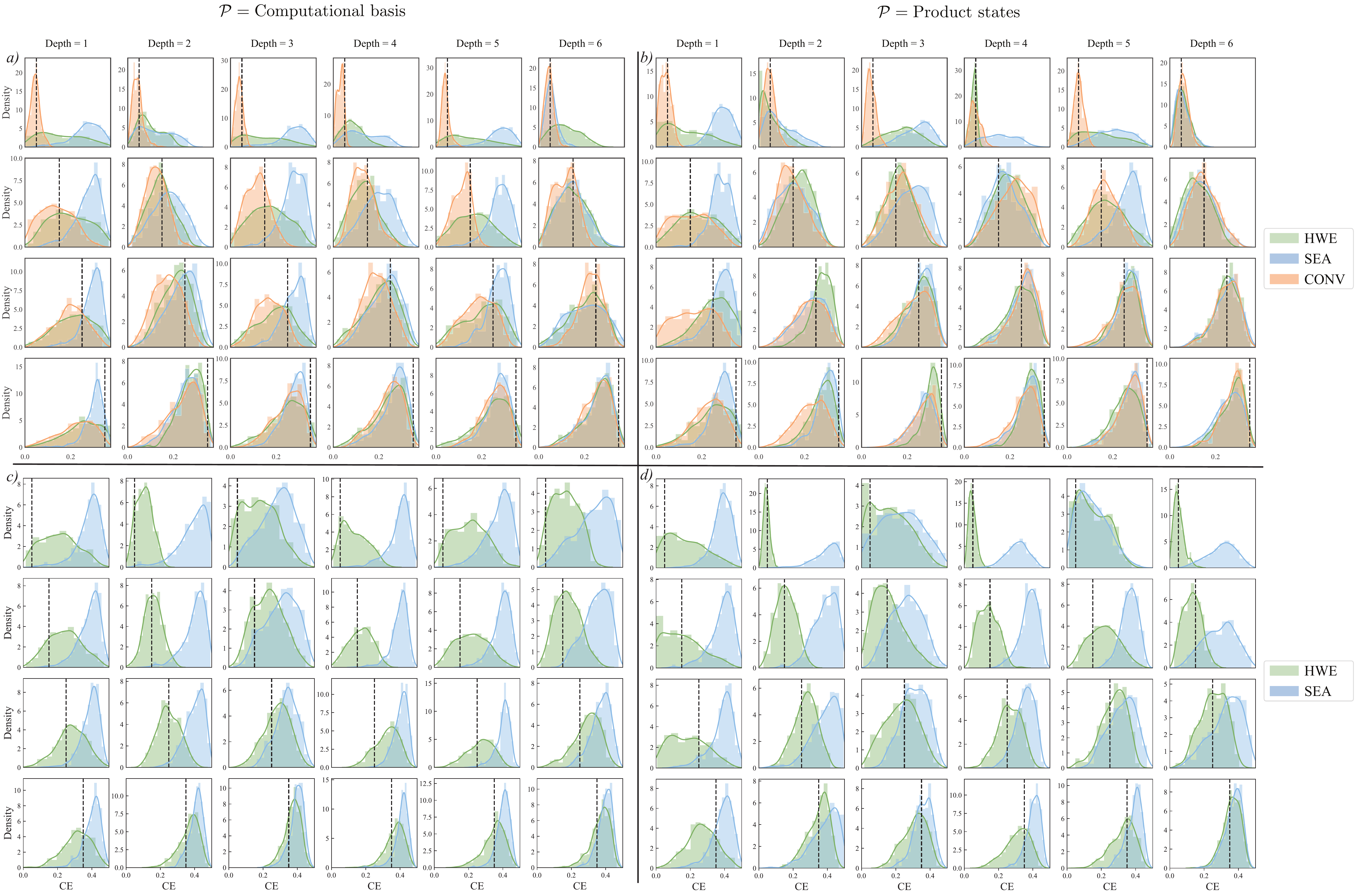}
     \caption{ \textbf{Distribution of entanglement for the $3$ and $4$ qubit NTangled dataset.}  The left (right) column displays generators trained on the computational basis (random product states). (a \& b) are 3 qubit generators while (c \& d) are 4 qubit generators. {Generators were trained for $\zeta\in\{0.05,0.15,0.25,0.35\}$. The data shown comes from the best of 50 trained models for each task.}}
    \label{fig:entanglement_distributions}
\end{sidewaysfigure*}

\end{document}